\title{\textbf{Two-sample Testing on Latent Distance Graphs With Unknown Link Functions}\\[0.3in]}
\date{\today}
\DeclarePairedDelimiter{\ceil}{\lceil}{\rceil}
\newcolumntype{L}[1]{>{\raggedright\arraybackslash}p{#1}}
\newtheorem{lemma}{Lemma}
\newtheorem{theorem}{Theorem}
\newtheorem{definition}{Definition}
\newtheorem{corollary}{Corollary}
\newtheorem{assumption}{Assumption}
\newtheorem{remark}{Remark}
\numberwithin{theorem}{section}
\numberwithin{equation}{section}
\author[1]{Yiran Wang}
\author[2]{Minh Tang}
\author[3]{Soumendra Nath Lahiri}
\affil[1,2]{Department of Statistics, North Carolina State University}
\affil[3]{Department of Mathematics and Statistics, Washington University in St. Louis}
\title{Two-sample hypothesis testing for latent distance graphs with unknown link functions}
\begin{document}

\maketitle
\begin{abstract} We propose a valid and consistent test for the
hypothesis that two latent distance random graphs on the same
vertex set have the same generating latent positions, up to some
unidentifiable similarity transformations. Our test statistic is based
on first estimating the edge probabilities matrices by
truncating the singular value decompositions of the averaged adjacency
matrices in each population
and then computing a Spearman rank correlation coefficient between these
estimates. Experimental results on simulated data indicate that the
test procedure has power even when there is only one sample from each
population, provided that the number of vertices is not
too small. Application on a dataset of neural connectome graphs showed
that we can distinguish between scans from different age groups while
application on a dataset of epileptogenic recordings showed that we
can discriminate between seizure and non-seizure events. 
\end{abstract}

{\bf Keywords:} 
Graph inference; Latent distance graphs model; Two-sample hypothesis testing.

\newpage
\section{Introduction}

In recent years, the increasing popularity of
network data in diverse fields has spurred significant developments in
many theoretical and applied research related to random graph models
and their statistical inference
\cite{erd6s1960evolution,hoff2002latent,handcock2007model,wasserman1996logit,holland1983stochastic, airoldi2008mixed, karrer2011stochastic}. A
significant amount of literature on statistical inference for random
graphs has focused on estimation \cite{chatterjee2015matrix, xu2017rates, Olhede14722} and community detection; see
\cite{abbe2017community} and the references therein for a survey of
recent progresses on community detection.

In contrasts, the problem of graph comparisons or two-sample hypothesis
testing on random graphs has not been as well studied in
statistics. Graph comparisons are widely used in neuroscience, with
two prominent approaches. One approach advocates comparing the edges
directly  as a collection of paired
$t$-tests \cite{zalesky2010network} while the other proposes to
compare graph-theoretic measures such as the clustering coefficient,
path length, and their ratio
\cite{rubinov2010complex,he2008structural,humphries2008network}. These
two approaches, while useful, have their own disadvantages. In particular, the
pairwise comparison of edges views a network on $n$ vertices as a
collection of $n(n-1)/2$ edges and ignores any underlying network
structure or topology. The use of graph-theoretic measures, meanwhile, 
implicitly assumes that a graph can be reasonably summarized by a few graph invariants. It is, however, not {\em a priori}
clear which invariants are appropriate and oftentimes a graph
invariant is chosen due to its computational cost, e.g., number of triangles versus general cliques. Finally,
neither of these approaches consider the statistical implications in
term of validity and consistency of the test procedures. 

In statistics literature, several methods have been investigated
recently. \cite{ginestet2017hypothesis} derived a
central limit theorem for the sample Fr\'{e}chet mean of combinatorial
graph Laplacians and built a Wald-type two-sample test
statistic. \cite{ghoshdastidar2017two} proposed to compare the
underlying graph-generating distributions with a test statistic based
on differences of the estimated edge-probability matrices with respect
to the spectral or Frobenius norms. \cite{ghoshdastidar2018practical}
further developed a test statistic via extreme eigenvalues of a
scaled and centralized matrix as motivated by the Tracy-Widom
law. The aforementioned literature generally does not assume any specific generative model for the observed graphs. For example,
\cite{ginestet2017hypothesis} only assumes that the vertices are
aligned while \cite{ghoshdastidar2017two} and \cite{ghoshdastidar2018practical}
only require conditional independence of edges.  There is then an
inherent tradeoff between generality of the generative model and
specificity of the theoretical results; indeed, for these models,
a graph on $n$
vertices could require $n(n-1)/2$ parameters for the pairwise edge
probabilities. To address the potential need for estimating these
parameters, \cite{ginestet2017hypothesis} assume that the number of graphs is
reasonably large compared to the number of vertices, while the
necessary conditions for consistency of the test statistics in
\cite{ghoshdastidar2017two} and \cite{ghoshdastidar2018practical} are quite
complex.

Various popular generative graph models such as the stochastic block
model, latent space model and their variants, have also been actively
studied in the context of two-sample testing problems.
Stochastic block model graphs, first introduced by
\cite{holland1983stochastic}, assume that vertices are partitioned
into several unobserved blocks and the probability of connection is a
function of block membership. Under this model, \cite{li2018two} studied the problem of
testing the differences of block memberships and constructed test statistic via singular subspace
distance. The notion of hidden communities in stochastic blockmodel
graphs can be generalized to yield latent space model graphs or graphons
\cite{hoff2002latent,bollobas_riordan,lovasz2012}.  In the latent
space model each vertex $v_i$ is associated with a latent position
$x_i \in \mathbb{R}^{d}$ and, conditioned on the latent positions of
these vertices, the edges are independent Bernoulli random variables with
mean $p_{ij}=f(x_i,x_j)$ where $f$ is a symmetric link
function. 
A special case of a latent space model is the notion of a
generalized random dot product graph
\cite{young2007random,delanchy_grdpg} wherein $f(x_i, x_j) = \langle
x_i, x_j \rangle$ for some inner product or bilinear form $\langle
\cdot, \cdot \rangle$. Random dot product graphs include, as a
special case, stochastic blockmodel graphs and its degree-corrected
and mixed-membership variants, and furthermore, any latent position
graphs can be represented as a random dot product graph with a fixed
dimension $d$ or be approximated arbitrarily well by a random dot
product graph with growing $d$.  For the random dot product graphs,
\cite{tang2017semiparametric} considered the two-sample problem of
determining whether or not two graphs on the same vertex set have the
same generating latent positions or have generating latent positions
that are scaled or diagonal transformations of one another;
\cite{tang2017nonparametric} studied a related problem wherein the
graphs can have different vertex sets with possibly differing numbers
of vertices. Another special case of the latent space model specifies
that $f$ is the logit link function and for this choice of $f$,
\cite{durante2018bayesian} developed a Bayesian procedure for testing
group differences in the network structure that also relies on a
low-rank representation of the latent positions together with
edge-specific latent covariates.

The appeal, and consequently power and utility,
of the latent space formulation for two-sample testing
stems from the fact that a $n$ vertices
graph can be parameterized by the $n \times d$ matrix
of latent positions $\{x_i\}$; this is, when $n \gg
d$, a considerable reduction in the number of parameters compared
to the $n(n-1)/2$ edge probabilities. This
reduction, however, is possible only if the $\{x_i\}$ can be estimated accurately, and this is generally done by assuming that $f$ is known; e.g., $f$ is a bilinear form 
\cite{tang2017semiparametric,tang2017nonparametric} or the
logistic function \cite{durante2018bayesian}. 

We consider in this paper another two-sample testing problem for
latent position graphs, but, in contrasts to existing works
we neither assume that the link function $f$ is known nor that it need
to be the same between the two-samples. More specifically, we consider
the class of latent distance random graphs wherein we assume that
$f(x_i, x_j) = h(\|x_i - x_j\|)$ for some {\em unknown} non-increasing
function $h$ that could differ between the two samples. It is not
{\em a priori} clear that the latent positions are even identifiable;
we show subsequently that the latent positions are identifiable up to
a similarity transformation.

The problem is of significant theoretical and practical interest
because of the following reasons. The first is that many of the
currently studied two-sample testing problems have test statistics
that are constructed using the difference of adjacency matrices or the
estimated edge probability matrices.  Since we assume the link
function is unknown and possibly different, this commonly used method
is no longer valid. As we will clarify later, even when we know that
the link functions are of the same form, they may still depend on
unknown parameters that are different between the two samples, and
thus we have different edge-probability matrices which cannot be
compared directly. The second reason is that, due to the
non-identifiability of latent positions, our test procedures allow for
more flexible comparisons than just whether or not the two latent
positions are {\em exactly} the same, i.e., our tests are for equality
up to general similarity transformation which includes any
transformation that preserves the {\em ordering} of pairwise
distances.

Our test procedure, even after accounting for all this complex source
of non-identifiability, is quite simple. We estimate the
edge probabilities matrices by truncating the singular value
decomposition of the averaged adjacency matrices in each population and
then compute our test statistic as the Spearman rank correlation between these
estimates. Significance values are obtained either via a permutation
test when the number of samples from each population is moderate, or
via a bootstrapping scheme in the case when there are only one or two
samples in each population.

\section{Methodology}
\label{sec:methodology}
We first recall the definition of latent distance random graphs \cite{hoff2002latent}.
\begin{definition}[Latent Distance Random Graphs]
Let $h$ be a monotone decreasing function from $\mathbb{R}$ to $[0,1]$ and assume $h(0)=1$ for identifiability. Let $n,d \geq 1$ be given and let $X = [x_1 \mid \dots \mid x_n]^{\top}$ be a $n \times d$ matrix with rows $x_i \in \mathbb{R}^{d}$. A $n \times n$ adjacency matrix $A$ is said to be an instance of a latent distance random graph with latent position $X$ and sparsity parameter $\rho \in (0,1]$ if $A$ is a symmetric, hollow matrix whose upper triangular entries $a_{ij}, i < j$ are conditionally independent Bernoulli random variables with $\mathrm{pr}(a_{ij} = 1) = \rho h(\|x_i - x_j\|)$, i.e., the likelihood of $A$ given $X$ is
	\begin{align*}
		\mathrm{pr}(A|X)=\prod_{i<j}\left\{\rho h(\|{x}_i-{x}_j\|)\right\}^{a_{ij}}\left\{1-\rho h(\|{x}_i-{x}_j\|)\right\}^{1-a_{ij}}.
	\end{align*}


\end{definition}

The graphs we study are undirected, unweighted and loop-free. Given
two adjacency matrices $A$ and $B$ for a pair of random latent
distance graphs on the same set of vertices, we will propose a valid,
consistent test to determine whether the two generating latent
positions are equal up to similarity transformation, e.g., scaling and
orthogonal transformation.

Generally speaking, the link functions are unknown. 
Even when the specific form of the link functions are known, there could still be unknown parameters. 
For example, the original latent space model of \cite{hoff2002latent} uses the logistic function, i.e.,
$$h(\|{x}_i-{x}_j\|)=\frac{\exp(\alpha-\beta\|{x}_i-{x}_j\|)}{1+\exp(\alpha-\beta\|{x}_i-{x}_j\|)},$$where $\alpha\in\mathbb{R}$ and $\beta>0$.
Other alternatives were discussed in \cite{raftery2017comment}. \cite{gollini2016joint} replaced the Euclidean distance by the squared distance to allow higher edge-probability for close points. \cite{rastelli2016properties} replaced the logistic function by a Gaussian kernel
$$h(\|{x}_i-{x}_j\|)=\gamma\exp\left(-\frac{\|{x}_i-{x}_j\|^2}{2\phi}\right),$$
where $\gamma\in[0,1]$ and $\phi>0$. Even if the link functions of networks to be compared are known and in the same form, it is still reasonable for them to have different parameters, such as $\alpha,\beta$ in the logistic function and $\gamma,\phi$ in the Gaussian function. Furthermore, even when the link functions are  the same with exactly identical parameters, if the latent positions are similar up to an unknown similarity transformation then the edge-probability matrices are not equal and cannot be compared directly. We are thus motivated to consider the following two-sample hypothesis testing problem.

Let $X,Y\in\mathbb{R}^{n\times d}$. We define the edge-probability
matrices $P=(p_{ij})\in\mathbb{R}^{n\times n}$ where
$p_{ij}=h(\|{x}_i-{x}_j\|)$ and $Q=(q_{ij})\in\mathbb{R}^{n\times n}$
where $q_{ij}=g(\|{y}_i-{y}_j\|)$. The link functions $h$ and $g$ are
unknown and possibly different. We shall assume, for identifiability, that $h(0) = g(0) = 1$. Given $A_1,...,A_m$ and $B_1,...,B_m$ generated from latent distance random graphs with latent positions $X$ and $Y$ respectively, where $m\geq1$, the two-sample testing problem is defined formally as 
\begin{align*}
&H_0: X=sYW+1t^{\top}\text{ for some $s\in\mathbb{R}$, orthogonal $W\in\mathbb{R}^{d\times d}$, ${t}\in\mathbb{R}^d$ against}\\
&H_a:X\neq sYW+1t^{\top}\text{ for any $s\in\mathbb{R}$,  orthogonal $W\in\mathbb{R}^{d\times d}$, ${t}\in\mathbb{R}^d$}.
\end{align*}
The above null hypothesis captures the notion that two latent
positions are the same up to similarity transformations.

Our test procedure starts by estimating the edge-probability matrices using a
singular value thresholding procedure. More specifically, we compute $\bar{A} =
m^{-1} \sum_{i} A_i$ and let $\hat{P}$ be the best rank-$K$ approximation of $\bar{A}$ with respect
to the Frobenius norm, i.e., $\hat{P}$ is obtained by computing the
singular value decomposition of $\bar{A}$ and keeping only the $K$
largest singular values and corresponding singular vectors. The estimate $\hat{Q}$ of $Q$ is constructed similarly. Singular value
thresholding procedures have been actively studied in \cite{chatterjee2015matrix}
and \cite{xu2017rates}. As discussed in \cite{xu2017rates}, the
choice of dimension $K$ can be determined by a threshold
$\tau=c_0(n\rho)^{1/2}$ where $c_0$ is a universal constant
strictly larger than $4$ in the case of $n\rho\gg\log(n)$ and strictly
larger than $2$ in the case of $n\rho\gg\log^4(n)$. For our simulation
and real data analysis, we frequently set $K$ to a fixed value or
choose $K$ using the dimension selection procedure of \cite{zhu_ghodsi}.

Since the link function is assumed to be monotone in latent distance
random graphs, under the null hypothesis, the ordering of the entries
in the two edge-probability matrices should be the same, i.e., $p_{ij} \leq
p_{k \ell}$ if and only if $q_{ij} \leq q_{k \ell}$. Thus, several
rank-based or order-based methods can be used to construct similar test
statistic, including Kendall's $\tau$ coefficient, Spearman's rank
correlation coefficient, non-metric multidimensional scaling and
isotonic regression. There are, however, important computational or
theoretical challenges for some of these methods. In particular, Kendall's $\tau$ is computationally intensive with $O(n^4)$
time complexity where $n$ is the number of vertices. An approximation
for Kendall's $\tau$ with $O(n^2\log n)$ complexity has been developed
but its impact on the theoretical properties of the resulting test statistic is
unknown. Non-metric multidimensional scaling is also computationally
intensive as it is generally formulated as a non-convex problem with
multiple local minima and thus one is not guaranteed to find the
global minimum. Using isotonic regression, we can consider
$p_{ij}=f(q_{ij})$ and test whether the function
$f$ is monotone, but the corresponding theory in the case where the
predictor variable is noisy has not been well-studied.

We thus propose a test statistic defined using Spearman's rank
correlation coefficient, which is computationally efficient with
$O(n^2)$ complexity. That is,
\begin{align}
\label{test1}
T_n(\hat{P},\hat{Q})&=\frac{\mathrm{cov}\{R(\hat{P}),R(\hat{Q})\}}{\hat{\sigma}\{R(\hat{P})\}\hat{\sigma}\{R(\hat{Q})\}},
\end{align} where $R(\hat{P})\in\mathbb{R}^{n\times n}$ and
$R(\hat{Q})\in\mathbb{R}^{n\times n}$ are symmetric matrices whose
entries are the ranks of the corresponding entries in $\hat{P}$ and $\hat{Q}$, 
$\mathrm{cov}\{R(\hat{P}),R(\hat{Q})\}$ is the sample covariance of
these ranks, and $\hat{\sigma}\{R(\hat{P})\}$ and
$\hat{\sigma}\{R(\hat{Q})\}$ are the standard
deviations. Given the significance level $\alpha\in(0,1)$, the rejection region
$\mathcal{R}$ for the test statistic $T_n$ is $\mathcal{R}=\left\{t
|\text{ $p$-value}(t)<\alpha\right\}$, where $\text{$p$-value}(t)$ can
be determined either via a permutation test or via a bootstrapping
procedure as described in Algorithm
\ref{alg_boot}. When the number of samples from each population is moderately
large then both methods should perform well. If
the number of samples is small, or even in the case when there is only
a single network observation for each population, then the bootstrap could be more robust.

\begin{algorithm}[tp]
\caption{Bootstrap}
\label{alg_boot}
\textbf{Input}: Two adjacency matrices $A,B\in \mathbb{R}^{n\times n}$. Number of bootstrap replications $N$.
\begin{enumerate} \item[Step 1.] Apply universal singular value thresholding on $A$ and $B$ to get $\hat{P}$ and $\hat{Q}$.
	\item[Step 2.] Calculate test statistic as $t^*:=T_n(\hat{P},\hat{Q})$.
\end{enumerate}
For $k=1,...,N$, repeat steps 3, 4 and 5:
\begin{enumerate}
 	\setcounter{enumi}{2}
	\item[Step 3.] Generate $A_1^{(k)}=\left(a_{1,ij}^{(k)}\right)_{n\times n}$, $A_2^{(k)}=\left(a_{2,ij}^{(k)}\right)_{n\times n}$, $B_1^{(k)}=\left(b_{1,ij}^{(k)}\right)_{n\times n}$ and $B_2^{(k)}=\left(b_{2,ij}^{(k)}\right)_{n\times n}$ as
	\begin{gather*}
	    a_{1,ij}^{(k)}\overset{i.i.d}{\sim} \text{Bernoulli}(\hat{p}_{ij}),\quad a_{2,ij}^{(k)}\overset{i.i.d}{\sim} \text{Bernoulli}(\hat{p}_{ij}), \\
	    b_{1,ij}^{(k)}\overset{i.i.d}{\sim}\text{Bernoulli}(\hat{q}_{ij}),\quad b_{2,ij}^{(k)}\overset{i.i.d}{\sim} \text{Bernoulli}(\hat{q}_{ij}).
	    \end{gather*}
	\item[Step 4.] Apply universal singular value thresholding on the bootstrapped adjacency matrices to get $\hat{P}_1^{(k)}$, $\hat{P}_2^{(k)}$, $\hat{Q}_1^{(k)}$ and $\hat{Q}_2^{(k)}$.
	\item[Step 5.] Calculate test statistic as $t_P^{(k)}:=T_n\left(\hat{P}_1^{(k)},\hat{P}_2^{(k)}\right)$ and $t_Q^{(k)}:=T_n\left(\hat{Q}_1^{(k)},\hat{Q}_2^{(k)}\right)$ 
	\item[Step 6.] Calculate the $p$-value as $$\text{$p$-value}=\min\left[\max\left\{\frac{1}{N}\sum_{k=1}^NI\left(t^*<t_P^{(k)}\right),\frac{1}{N}\sum_{k=1}^NI\left(t^*<t_Q^{(k)}\right)\right\},1\right].$$
\end{enumerate}
\textbf{Output}: $p$-value of the proposed testing procedure.
\end{algorithm}

\section{Main Results} We now establish the main theoretical properties of the
proposed test procedure as the number of
vertices $n$ increases. We will assume that the number of graphs $m$ in each sample, is bounded; thus, for ease
of exposition, we set $m = 1$ throughout. The case when $m
\rightarrow \infty$ with increasing $n$ is considerably simpler and is
thus ignored.

We start by introducing several mild assumptions on the
link functions $h$ and $g$ and the latent positions $\{x_i\}_{i=1}^{n}$ and
$\{y_i\}_{i=1}^{n}$.
\begin{assumption}
As $n\to\infty$, for any $\epsilon>0$, there exists $\delta>0$ such that $[0,1)$ can be partitioned into the union of intervals of the form $[(k-1)\delta,k\delta)$ for $k=1,...,\lceil 1/\delta \rceil$, such that, for any $k$, one of the following two conditions holds almost surely:
\begin{enumerate}[label=(\roman*)]
	\item Either the number of $ij$ pairs with $i < j$ and $p_{ij}\in[(k-1)\delta,k\delta)$ is at most $n(n-1)\epsilon/2$.
	\item Or if the number of $ij$ pairs with $i < j$ and $p_{ij}\in[(k-1)\delta,k\delta)$ exceeds $n(n-1)\epsilon/2$, then they are all equal for $p_{ij} \in [(k-1)\delta,k\delta)$.
\end{enumerate}
\end{assumption}
\begin{assumption}
Define $$\hat{\sigma}\{R(P)\}=\Bigl[\tbinom{n}{2}^{-1}\sum_{i<j}\Bigl\{R(p_{ij})-\tbinom{n}{2}^{-1}\sum_{i<j}R(p_{ij})\Bigr\}^2\Bigr]^{1/2}$$ as the sample variance for the ranks of the entries in $P$. Define $\hat{\sigma}\{R(Q)\}$ similarly. Then as $n \rightarrow \infty$, 
$\hat{\sigma}\{R(P)\}=\Omega(n^2)$ and $\hat{\sigma}\{R(Q)\}=\Omega(n^2)$ almost surely.
\end{assumption}
\begin{assumption}
The link functions $h$ and $g$ are fixed with $n$ and both are
infinitely differentiable. The latent positions $x_i \in U \subset
\mathbb{R}^{d} $and $y_i \in V \subset \mathbb{R}^{d}$ for some fixed
compact sets $U$ and $V$ that do not depend on $n$.
\end{assumption}
\begin{assumption}
There exists a constant $C$ not depending on $n$ such that, as $n$
increases, the sparsity parameter $\rho \in(0,1]$ satisfies $n\rho\geq
C\log n$.
\end{assumption}

\begin{remark} We now explain the rationale behind the above
  assumptions.
  \begin{enumerate}[label=(\roman*)]
    \item Assumption 1 prevents the setting where a large number of
latent positions concentrate around a single point $x_0$ with
increasing $n$ but that these points are not equal to $x_0$. If this
happens then the values of the $p_{ij}$ for this collection of points
would be almost identical but their ranks are substantially
different. For example, suppose there are $cn$ points around a small
neighbourhood of $x_0$. Then the $p_{ij}$ for the $cn(cn-1)/2$ pairs
in this neighbourhood will all be approximately $h(0)$. The rank of
the smallest and the largest of these $p_{ij}$ could, however, differ
by $cn(cn-1)/2$. Assumption $1$ arises purely because we do not assume
anything about a generative model for the latent positions
$\{x_i\}$. Indeed, if the latent positions $x_i$ are independent and
identically distributed samples from some distribution $F$, then for
any point $x_0$, either $F$ has an atom at $x_0$ which will then force
$p_{ij} = c$ for some constant $c$ whenever $x_i = x_j =
x_0$. Otherwise, if $F$ is non-atomic at $x_0$ then the proportion of
points $x_i$ with $\|x_i - x_0\| \leq \delta$ will converge to $0$ as
$\delta \rightarrow 0$.
    \item Assumption 2 complements Assumption 1 and prevents the ranks
of the entries of $P$ and $Q$ from being degenerate. Suppose, for
example, that there are $n-o(n)$ points located at a certain position
${x}_0$. In this case, $\hat{\sigma}\{R(P)\}=o(n^2)$. The problem of
testing whether $X$ is equal to $Y$ up to a similarity transformation
can thus be reduced to consider only the subgraphs induced by these
$o(n)$ points. We can then apply the test procedure in this paper,
assuming that these induced subgraphs can be found efficiently. The
problem of identifying these subgraphs is, however, outside the scope
of our current investigation.
    \item Assumption 3 restricts the smoothness of the link
functions. This is done entirely for ease of exposition. The
assumption can easily be relaxed as it only affects the accuracy of
the universal singular value thresholding procedure used in estimating
edge-probability, which in turn affects the convergence rate of the
test statistic. More specifically, suppose the link function $h$
belongs to a H\"{o}lder class or Sobolev class with index
$\omega$. Then Theorem 1 of \cite{xu2017rates} implies
that $$\frac{1}{n^2}\|\hat{P}-P\|_F^2=O_
p\left((n\rho)^{-\frac{2\omega}{2\omega+d}}\right).$$ The convergence
rate of our test statistic will then depends on $\omega$ and is thus
slower than the convergence rate for infinitely differentiable link
functions as given in Corollary~\ref{cor:main}.
  \item Assumption~4 is identical to that used in
    \cite{xu2017rates}. We restrict the sparsity of the observed
graphs in order to guarantee that the singular value thresholding
estimates $\hat{P}$ and $\hat{Q}$ are accurate estimates of $P$ and
$Q$.
   \end{enumerate}
\end{remark}

With the above assumptions in place, we now show that the test
statistic $T_n$ constructed using appropriately {\em discretized}
versions of the estimated edge
probabilities matrices $\hat{P}$ and $\hat{Q}$ is, asymptotically, the
same as that constructed using the true $P$ and $Q$. The need for
discretizing the entries of $\hat{P}$ and $\hat{Q}$ is due mainly to
the fact that the estimates $\{\hat{p}_{ij}\}$ and $\{\hat{q}_{ij}\}$
are inherently noisy. Suppose for example that $p_{ij}=0.1$ for all $ij$ pairs. Then
$R(p_{ij}) \equiv \{n(n-1)/2 + 1\}/2$, the average of the ranks in
$\{1,...,n(n-1)/2\}$. However, because of the estimation error, the
$\hat{p}_{ij}$ might contain numerous distinct values like
$\{0.101,0.102,...\}$ and thus
$\sum_{i<j}\{R(p_{ij})-R(\hat{p}_{ij})\}^2$ can be quite large even
though the estimates $\hat{p}_{ij}$ are all approximately equal to the
true $p_{ij}$. We are thus motivated to consider a more
robust estimator obtained by discretizing the $\{\hat{p}_{ij}\}$,
i.e., let $\eta > 0$ and define the $\eta$-discretization of $\hat{p}_{ij}$ as
$$\tilde{p}_{ij}=\ceil* {\frac{\hat{p}_{ij}}{\eta}} \times \eta.$$
Recall the above example. By letting $\eta=0.01$, we have
$\tilde{p}_{ij} = 0.1$ provided that $\mid\hat{p}_{ij} - 0.1\mid \leq
0.01$ and hence the ranks of these $\tilde{p}_{ij}$ are the
same. We emphasize
that while this discretization step simplifies the subsequent theory
considerably, it is not essential in real data analysis as we can
always choose $\eta$ sufficiently small so that $\tilde{p}_{ij}$ is
arbitrarily close to $\hat{p}_{ij}$.

\begin{theorem}
  \label{thm:main_theorem}
  Assume Assumptions 1--4 hold. Then for sufficiently
large $n$,
$$T_n(\tilde{P},\tilde{Q})-T_n(P,Q)=o_p(1).$$
Here $\tilde{P}$ and $\tilde{Q}$ are the $\eta$-discretization of
$\hat{P}$ and $\hat{Q}$ with $(\eta^2 \rho)^{-1} = o(n)$ as $n \rightarrow \infty$. 
\end{theorem}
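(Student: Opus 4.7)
The plan is to reduce the theorem to establishing the $L^2$ rank-perturbation bound $\sum_{i<j}\{R(\tilde p_{ij})-R(p_{ij})\}^2=o_p(n^6)$ together with its $Q$-analogue. Granting these, I would expand
\begin{align*}
\mathrm{cov}\{R(\tilde P),R(\tilde Q)\}-\mathrm{cov}\{R(P),R(Q)\}&=\mathrm{cov}(\Delta^P,R^Q)+\mathrm{cov}(R^P,\Delta^Q)+\mathrm{cov}(\Delta^P,\Delta^Q)
\end{align*}
with $\Delta^P_{ij}:=R(\tilde p_{ij})-R(p_{ij})$ and similarly for $Q$, and likewise expand $\hat\sigma\{R(\tilde P)\}^2-\hat\sigma\{R(P)\}^2=\mathrm{Var}(\Delta^P)+2\,\mathrm{cov}(R^P,\Delta^P)$. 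Applying Cauchy--Schwarz with $\mathrm{Var}(R^P),\mathrm{Var}(R^Q)=\Theta(n^4)$ (upper bound from the $\{1,\ldots,N\}$ range of ranks, lower bound from Assumption~2) forces the numerator of $T_n$ to shift by $o_p(n^4)$ and each standard deviation by a factor $1+o_p(1)$, so the ratio gives $T_n(\tilde P,\tilde Q)-T_n(P,Q)=o_p(1)$.

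For the substantive $L^2$ bound, first Theorem~1 of \cite{xu2017rates} applied under Assumptions~3--4 with $\omega=\infty$ yields $\|\hat P-P\|_F^2=O_p(n/\rho)$, and Markov's inequality bounds the bad set $\mathcal{B}_P:=\{(i,j):|\hat p_{ij}-p_{ij}|>\eta/4\}$ by $O_p(n/(\eta^2\rho))=o_p(n^2)$ under the hypothesis $(\eta^2\rho)^{-1}=o(n)$. For a good pair the $\eta$-discretization satisfies $|\tilde p_{ij}-p_{ij}|\leq 5\eta/4$, so two good pairs can only flip in rank order when $|p_{ij}-p_{k\ell}|\leq 3\eta$. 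This yields $|\Delta^P_{ij}|\leq|\mathcal{B}_P|+C_{ij}$ with $C_{ij}:=\#\{(k,\ell):|p_{k\ell}-p_{ij}|\leq 3\eta\}$, and the bad-pair contribution to $\sum(\Delta^P)^2$ is at most $|\mathcal{B}_P|^2\cdot N=o_p(n^6)$.

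To bound the good-pair contribution, I would invoke Assumption~1 with an arbitrary $\epsilon>0$ to obtain a width-$\delta(\epsilon)$ partition of $[0,1)$; since $\eta\to 0$, for large $n$ each $C_{ij}$ is dominated by the pair totals of at most three adjacent bins. Type~(i) bins contain at most $\epsilon n^2/2$ pairs, contributing $O(\epsilon^2n^6)$ in aggregate to $\sum C_{ij}^2$. For a type~(ii) bin with all $p_{ij}=v_k$ and size $m_k$, every pair shares the common midrank $R_0$ under $R(P)$, so $\Delta^P_{ij}$ equals the displacement from $R_0$ to the midrank of the $\eta$-sub-bin of $\tilde P$ containing $(i,j)$; since good $\tilde p$-values from the block occupy at most two adjacent multiples of $\eta$, a direct computation bounds the block's contribution to $\sum(\Delta^P)^2$ by $m_k^{(1)}m_k^{(2)}m_k/4$, up to low-order interactions with neighboring blocks whose $\tilde p$-values land on the same grid points. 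Aggregating the two regimes and letting $\epsilon\to 0$ after $n\to\infty$ delivers the required estimate.

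The main obstacle is a type~(ii) tie block of size $m\asymp n^2$ whose discretization split is nearly balanced: per-pair rank shifts then reach order $m$, and the naive $L^2$ estimate only gives $\sum(\Delta^P)^2\asymp n^6$ rather than $o(n^6)$. The key to sharpening this is that the sub-bin assignment inside the block is determined by $\hat P$, a function of the $P$-graph $A$ and therefore independent of $B$; combined with the symmetric centering $E[\Delta^P_{ij}]=0$ for tie-block pairs, a concentration argument refines the worst-case Cauchy--Schwarz bound on $\mathrm{cov}(\Delta^P,R^Q)$, and the attendant variance shift absorbs into the $\Omega(n^4)$ lower bound of Assumption~2 at the level of the Spearman ratio, which is what ultimately closes the argument.
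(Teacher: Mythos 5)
Your proposal tracks the paper's proof closely in its overall architecture: the same decomposition of $T_n(\tilde{P},\tilde{Q})-T_n(P,Q)$ into a covariance perturbation and a standard-deviation perturbation, the same use of Theorem 1 of \cite{xu2017rates} to produce a bad set of $o_p(n^2)$ pairs, the same use of Assumption 1 to control the rank displacement of the good pairs, and the same appeal to Assumption 2 for the $\Omega(n^4)$ denominator. The paper works with a uniform per-pair bound $|R(\tilde{p}_{rs})-R(p_{rs})|\leq 5\binom{n}{2}\epsilon$ for good pairs (its Lemma 3) rather than your aggregate bound $\sum_{i<j}(\Delta^P_{ij})^2=o_p(n^6)$; that difference is cosmetic, since Cauchy--Schwarz converts one into the other at the level of the covariance and variance terms.

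The genuine gap is in your final paragraph, at exactly the point you flag as ``the main obstacle.'' Your independence-plus-concentration argument can plausibly control $\mathrm{cov}(\Delta^P,R^Q)$, because the split of a type (ii) tie block under discretization is a function of the $P$-sample alone. But it cannot control $\mathrm{Var}(\Delta^P)$, which sits in the denominator of the Spearman ratio and is not a cross term. A tie block of $m\asymp n^2$ pairs splitting into sub-blocks of sizes $m_1,m_2$ contributes roughly $m_1m_2m/n^2$ to $\hat{\sigma}^2\{R(\tilde{P})\}-\hat{\sigma}^2\{R(P)\}$, which for a balanced split is $\Theta(n^4)$ --- the same order as $\hat{\sigma}^2\{R(P)\}$ itself. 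This is a deterministic-order inflation, not a fluctuation, so it does not ``absorb into the $\Omega(n^4)$ lower bound'': it changes the ratio by a constant factor. Concretely, if $P=Q$ with half the pairs at one value and half at another, and both blocks split evenly because the common values sit at grid points of the $\eta$-discretization, then $\hat{\sigma}^2\{R(\tilde{P})\}\to\tfrac{5}{4}\,\hat{\sigma}^2\{R(P)\}$ while the cross-covariance is essentially unchanged (by the very independence you invoke), giving $T_n(\tilde{P},\tilde{Q})\to 4/5$ against $T_n(P,Q)=1$. To close the argument one must show that type (ii) blocks do not split --- i.e., that the good-pair estimates within a block land in a single $\eta$-cell, or that ties are re-formed before ranking. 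This is what the paper's Lemma 3 implicitly assumes when it bounds quantities such as $|\{(i,j):p_{ij}\in(p_{rs}-4\eta,p_{rs}]\}|$ by $O\bigl(\binom{n}{2}\epsilon\bigr)$ without excluding the tie block containing $p_{rs}$ itself; you have correctly located the weak point of that lemma, but your repair does not yet fill it.
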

Theorem~\ref{thm:main_theorem} indicates that
$T_n(\tilde{P},\tilde{Q})-T_n(P,Q)$ converges to $0$ as $n \rightarrow
\infty$. There are, however, instances in which we are interested
in the rate of convergence of $T_n(\tilde{P},\tilde{Q})-T_n(P,Q)$ to $0$. We
derive the rate of convergence under the following more restrictive version
of Assumption~1.
\begin{assumption} There exists a constant $c>0$ independent of
$\delta$ such that for $k = 1,\dots,\lceil 1/\delta \rceil$, 
$$\mid\{(i,j):p_{ij}\in [(k-1)\delta,k\delta]\}\mid\leq c\cdot \delta \tbinom{n}{2}.$$
\end{assumption}
\begin{corollary}
\label{cor:main}
Under the conditions in Theorem 3.1 and Assumption 5, we have
$$T_n(\tilde{P},\tilde{Q})-T_n(P,Q)=O_p(\epsilon^{1/2}),$$
where $(\epsilon^2\rho)^{-1}=o(n)$.
\end{corollary}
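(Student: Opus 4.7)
The plan is to refine the argument used for Theorem~\ref{thm:main_theorem} by using Assumption~5 to replace the qualitative density bound of Assumption~1 with a quantitative one, while keeping explicit track of rates. Writing $c_P := R(P) - \bar{R}$ and $\tilde{c}_P := R(\tilde{P}) - \bar{R}$ as vectors indexed by the $\binom{n}{2}$ pairs with $i<j$, and analogously for $Q$ (where $\bar{R} = (\binom{n}{2}+1)/2$ is the common mean rank), the Spearman statistic is the cosine $T_n(P,Q) = \langle c_P, c_Q\rangle/(\|c_P\|\|c_Q\|)$. The task therefore reduces to bounding $\|\tilde{c}_P - c_P\|$ and $\|\tilde{c}_Q - c_Q\|$ against $\|c_P\|, \|c_Q\|$ and propagating through the cosine. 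As a first step I would separate pairs into good and bad: from the singular value thresholding bound $\|\hat{P} - P\|_F^2 = O_p(n/\rho)$ of \cite{xu2017rates}, Markov's inequality gives that the \emph{bad} pairs, those with $|\hat{p}_{ij} - p_{ij}| > \epsilon$, number only $O_p(n/(\rho\epsilon^2)) = o_p(n^2)$ under the hypothesis $(\epsilon^2\rho)^{-1} = o(n)$. After $\epsilon$-discretization every good pair obeys $|\tilde{p}_{ij} - p_{ij}| \leq 2\epsilon$, and the same split applies to $Q$.

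Next, I would control rank perturbations via Assumption~5. For a good $(i,j)$, any $(k,l)$ contributing to $|R(\tilde{p}_{ij}) - R(p_{ij})|$, namely one whose relative ordering with $(i,j)$ flips under the perturbation, must either itself be bad or satisfy $|p_{kl} - p_{ij}| \leq 4\epsilon$. Assumption~5 applied with $\delta = 4\epsilon$ caps the good such $(k,l)$ by $O(\epsilon)\binom{n}{2}$, while the bad $(k,l)$ contribute only an additional $o_p(n^2)$. Hence $|R(\tilde{p}_{ij}) - R(p_{ij})| = O_p(\epsilon n^2)$ uniformly over good $(i,j)$; for the $o_p(n^2)$ bad $(i,j)$ only the trivial bound $O(n^2)$ is available.

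Aggregating, $\|\tilde{c}_P - c_P\|^2 \leq \binom{n}{2} \cdot O_p(\epsilon^2 n^4) + o_p(n^2) \cdot O(n^4) = O_p(\epsilon^2 n^6)$, so $\|\tilde{c}_P - c_P\| = O_p(\epsilon n^3)$, and similarly for $Q$. By Assumption~2, $\|c_P\| = \hat{\sigma}\{R(P)\}\sqrt{\binom{n}{2}} = \Omega(n^3)$. Expanding $T_n(\tilde{P},\tilde{Q}) - T_n(P,Q)$ as a perturbation of the cosine, bounding numerator increments by Cauchy--Schwarz, and handling denominator norms via the crude estimate $|\sqrt{u}-\sqrt{v}|\leq \sqrt{|u-v|}$ applied to $u = \|\tilde{c}_P\|^2$ and $v = \|c_P\|^2$ (where $|u - v| = O_p(\epsilon n^6)$ is what produces the square root), one obtains the claimed $O_p(\epsilon^{1/2})$ rate. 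The main obstacle is the bookkeeping in the second step: bad entries both contaminate the ranks of good entries and admit only the trivial rank-difference bound for themselves, and one must certify that neither effect spoils the aggregated $L^2$ bound beyond $O_p(\epsilon^2 n^6)$. Once these uniform-in-entry bounds are in hand, the remainder follows by routine algebra.
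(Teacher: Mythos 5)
Your proposal is correct and follows essentially the same route as the paper, which proves the corollary simply by re-running the proof of Theorem~\ref{thm:main_theorem} with $\eta=\epsilon=c\delta$, using Assumption~5 to turn the qualitative interval-count bound of Assumption~1 into the quantitative bound $O(\epsilon)\tbinom{n}{2}$ on rank perturbations and extracting the $\epsilon^{1/2}$ from the same crude $|\sqrt{u}-\sqrt{v}|\leq\sqrt{|u-v|}$ step on the standard deviations. Your centered-rank-vector/cosine formulation is only a notational repackaging of the paper's Lemmas 1--3, and the bookkeeping issue you flag (bad pairs contaminating the ranks of good pairs) is handled in the paper with exactly the same level of looseness as in your sketch.
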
 Assumption~5 allows us to set $\eta = \epsilon = c
\delta$ in the proof of Theorem~\ref{thm:main_theorem}, thereby
yielding the convergence rate of $O_{p}(\epsilon^{1/2})$ in
Corollary~1. If the conditions in Assumption~5 are not satisfied then
there is, {\em a priori}, no explicit relationship
between $\eta$ and $\epsilon$ other than that $\epsilon \to 0$ as
$\eta\to 0$.

If the null hypothesis is true then $T_n(P, Q) = 1$ and hence, from
Theorem~\ref{thm:main_theorem}, we have $T_n(\tilde{P}, \tilde{Q})
\rightarrow 1$ almost surely as $n \rightarrow \infty$. A natural
question then is whether or not $T_n(\tilde{P},\tilde{Q}) \rightarrow
1$ also indicates that the matrix of latent positions $X$ is close, up to some similarity
transformation, to the matrix of latent positions $Y$ ? To address
this question we shall assume that the sequences of latent
positions $\{x_i\}_{i=1}^{n}$ and $\{y_i\}_{i=1}^{n}$ satisfy the
following denseness conditions as $n \rightarrow \infty$.
\begin{assumption}
  \label{ass:dense}
Let $U \subset\mathbb{R}^d$ and $V \subset \mathbb{R}^{d}$ be
non-empty, bounded and connected sets. Let $\Omega_n=\{{x}_1,...,{x}_n\}\subset U$ and
  $\Xi_{n} = \{y_1, y_2, \dots, y_n\} \subset V$. Then $\lim_{n
    \rightarrow \infty} \Omega_n$ and $\lim_{n \rightarrow \infty} \Xi_n$ 
  are dense in $U$ and $V$, respectively. Furthermore, for any $\epsilon > 0$
  there exist $\delta_U = \delta_U(\epsilon) > 0$ and $\delta_V =
  \delta_V(\epsilon) > 0$ depending on $\epsilon$ such that
  \begin{gather*}
  n^{-1} \liminf |B(x,\epsilon) \cap \Omega_n| \geq \delta_U, \qquad
  \text{for all $x \in U$}, \\
  n^{-1}\liminf |B(y,\epsilon) \cap \Xi_n| \geq \delta_V, \qquad
  \text{for all $y \in V$}.
  \end{gather*}
  Here $B(x, \epsilon)$ denote the ball of radius $\epsilon$ centered
  at $x$.
\end{assumption}
Assumption~\ref{ass:dense} is a regularity condition for the
minimum number of latent positions $\{x_i\}$ and $\{y_i\}$ in any arbitrarily
small, but non-vanishing subset of $U$ and
$V$. In particular, Assumption~\ref{ass:dense} prevents the setting
where, as $n \rightarrow \infty$, the sequence of latent positions
$\{x_i\}_{i=1}^{n}$ is dense in $U$, but that, for any sufficiently
large $n$, all except
$o(n)$ of these
$\{x_i\}$ are concentrated at some fixed
$K$ points $\nu_1, \dots \nu_K \in U$, i.e., the
denseness of the $\{x_i\}_{i=1}^{n}$ is due to a vanishing fraction of
the points. While the removal of these $o(n)$ points from both
$\{x_i\}$ and $\{y_i\}$ does not change the convergence $T_n(P, Q)$ to
$1$, it will lead to very different geometry for the remaining
latent positions. In summary, as we only require $T_n(P,Q) \rightarrow
1$, Assumption~\ref{ass:dense} guarantees
that the removal of any $o(n)$ points from the $\{x_i\}$ and
$\{y_i\}$ does not substantially change the geometry of the remaining
points, especially since the removal of any $o(n)$ points does not
change the convergence of $T_n(P,Q)$. 

The following result showed that if $X$ and $Y$ satisfy the conditions in Assumption~\ref{ass:dense}
and $T_n(P,Q) \rightarrow 1$ as $n
\rightarrow \infty$ then the Frobenius norm distance between $X$
and some similarity transformation of $Y$ is of
order $o(n^{1/2})$. Since there are $n$ rows in $X$ and $Y$, this indicates that for any {\em arbitrary}
but fixed $\epsilon > 0$, the number of rows $i$ such that $\|X_i - s
W Y_i - t\| \geq \epsilon$ is of order $o(n)$ as $n$ increases. That
is to say, almost all rows of $X$ are arbitrarily close to the
corresponding rows of some similarity transformation of
$Y$. %

\begin{theorem}
  \label{thm:converse}
Suppose that, as $n \rightarrow \infty$, the latent positions $X$ and
$Y$ satisfy Assumption $1$ through $4$ together with
Assumption~$6$. If $T_n(P,Q)\to 1$ as $n\to\infty$ then there exists $s
\in\mathbb{R}$, orthogonal matrix $W\in\mathbb{R}^{d\times d}$ and $t \in\mathbb{R}^d$ such that 
$$\|X-sYW-1t^{\top}\|_F=o(n^{1/2}).$$
\end{theorem}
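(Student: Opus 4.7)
The plan is to translate the convergence $T_n(P,Q)\to 1$ into an approximate order-preserving correspondence between the pairwise distances of $X$ and those of $Y$, pass to a limit to obtain an ordinal isometry between dense subsets of $U$ and $V$, and finally invoke a classical rigidity result to conclude that the limiting map is a similarity transformation.

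First, since $h$ and $g$ are monotone decreasing with $h(0)=g(0)=1$, $p_{ij}\leq p_{k\ell}$ if and only if $\|x_i-x_j\|\geq \|x_k-x_\ell\|$, and analogously for $Q$ and $\{y_i\}$. Under Assumption 2 the denominator of the Spearman coefficient is of order $n^4$, so $T_n(P,Q)\to 1$ forces the number of discordant quadruples $(i,j,k,\ell)$, those for which the orderings of $\|x_i-x_j\|,\|x_k-x_\ell\|$ and $\|y_i-y_j\|,\|y_k-y_\ell\|$ disagree, to be $o(n^4)$, via the standard identity relating Spearman's $\rho$ to the number of concordant/discordant index quadruples.

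Next I would construct a limiting correspondence $\phi:U\to V$ along a subsequence. Fix $x\in U$. Assumption 6 ensures that a positive fraction of indices $i$ have $x_i\in B(x,\varepsilon)$; an averaging argument based on the $o(n^4)$ discordant-quadruple bound, combined with the density and regularity of $\{y_i\}$ in $V$, shows that the associated $y_i$ must cluster around a single point of $V$, which I define as $\phi(x)$. One checks that $\phi$ is well-defined independently of the approximating sequence and is continuous because approximate ordinal preservation is inherited in the limit. A symmetric argument using the roles of $X$ and $Y$ swapped provides a continuous inverse. I would then verify that $\phi$ is a genuine ordinal isometry: for every $x,x',x'',x'''\in U$, $\|x-x'\|\leq \|x''-x'''\|$ if and only if $\|\phi(x)-\phi(x')\|\leq \|\phi(x'')-\phi(x''')\|$. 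At this point I would invoke a classical Euclidean rigidity theorem (in the spirit of Beckman--Quarles or Aleksandrov) asserting that any continuous ordinal isometry between open connected subsets of $\mathbb{R}^d$ is necessarily a similarity transformation, yielding $s\in\mathbb{R}$, orthogonal $W\in\mathbb{R}^{d\times d}$, and $t\in\mathbb{R}^d$ with $\phi^{-1}(y)=sWy+t$, hence $x_i\approx sWy_i+t$ for almost all $i$.

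To upgrade this to the stated Frobenius bound, I would combine the denseness condition with the boundedness of $U$ and $V$: for every fixed $\varepsilon>0$ the fraction of indices $i$ with $\|x_i - sWy_i - t\|\geq \varepsilon$ tends to $0$, and the remaining terms contribute $O(\varepsilon^2 n)$, so a diagonal argument gives $n^{-1}\sum_i \|x_i-sWy_i-t\|^2\to 0$, i.e.\ $\|X-sYW-1t^{\top}\|_F=o(n^{1/2})$. The hardest step is the construction and analysis of $\phi$: the $o(n^4)$ discordant-quadruple estimate is a weak form of ordinal agreement, and Assumption 6 must be used carefully to rule out pathological behavior, such as a sparse set of indices mapping nearby $x_i$ to widely separated $y_i$. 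Once $\phi$ is established as a bona fide ordinal isometry between open connected subsets of $\mathbb{R}^d$, the rigidity classification and the final averaging are comparatively routine.
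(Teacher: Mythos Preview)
Your outline is broadly correct and follows the same three-stage architecture as the paper: convert $T_n(P,Q)\to 1$ into approximate ordinal agreement of pairwise distances, build a limiting map that is order-preserving, and invoke a rigidity theorem to identify that map as a similarity. The execution, however, differs in two places worth noting.

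First, the paper does not construct $\phi$ by a clustering argument on balls in $U$ and $V$. Since the two samples share the same vertex set, the paper simply takes the index bijection $\phi_n \colon y_i \mapsto x_i$ and, using boundedness of $U$, extracts a pointwise limit $\phi$ along a subsequence (this is Lemma~2 of \cite{arias2017some}). Your clustering construction would work but is considerably more labor-intensive. Second, and more importantly, the rigidity result you need is not Beckman--Quarles or Aleksandrov (those concern maps preserving a \emph{fixed} distance or all distances, not the ordering of distances); the relevant statement is Theorem~1 of \cite{arias2017some}, which says that a \emph{weakly isotonic} map on a dense subset of a connected open set in $\mathbb{R}^d$ is a similarity. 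Note that weak isotonicity only requires preservation of comparisons $\|y_i-y_j\|<\|y_i-y_k\|$ among triples sharing a common point, which is weaker than the general-quadruple condition you aim for and hence easier to verify. To feed into this, the paper first proves an intermediate lemma (their Lemma~4) producing explicit monotone functions $f_n$ with $\max_{i,j}\bigl|\,\|y_i-y_j\|-f_n(\|x_i-x_j\|)\bigr|\to 0$; this is obtained from the rank-difference consequence of $T_n\to 1$ (most pairs have $|R(p_{ij})-R(q_{ij})|<\epsilon\binom{n}{2}$) rather than from a discordant-quadruple count, and it makes the weak-isotonicity check immediate. Your final Frobenius step via averaging is fine; the paper in fact gets the slightly stronger $\max_i\|x_i-sWy_i-t\|\to 0$.
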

The detailed proofs of Theorem~\ref{thm:main_theorem} and Theorem~\ref{thm:converse} are given in the
supplementary materials.  

We finally discuss the consistency of our test procedure. Since, as
$n$ increases, the dimension of our latent positions and the
associated edge probabilities matrices also increases, we shall define
consistency of our test procedure in the context of a sequence of
hypothesis tests.
\begin{definition}[Consistency]
  \label{def:consistency}
Let $(X_n,Y_n)_{n\in \mathbb{N}}$ be a given sequence of latent positions, where $X_n$ and $Y_n$ are both in $\mathbb{R}^{n\times d}$. A test statistic $T_n$ and associated rejection region $\mathcal{R}$ to test the hypothesis
\begin{equation*}
  \begin{split}
&H_0: X=sYW+1t^{\top}\text{ for some $s\in\mathbb{R}$, orthogonal $W\in\mathbb{R}^{d\times d}$, ${t}\in\mathbb{R}^d$ against}\\
&H_a:X\neq sYW+1t^{\top}\text{ for any $s\in\mathbb{R}$,  orthogonal $W\in\mathbb{R}^{d\times d}$, ${t}\in\mathbb{R}^d$}.
  \end{split}
\end{equation*}
is a consistent, asymptotically level $\alpha$ test if for any $\epsilon>0$, there exists $n_0=n_0(\epsilon)$ such that:
\begin{enumerate}[label=(\roman*)]
    \item If $n>n_0$ and $H_a^n$ is true, then $\mathrm{pr}(T_n\in \mathcal{R})>1-\epsilon$.
    \item If $n>n_0$ and $H_0^n$ is true, then $\mathrm{pr}(T_n\in \mathcal{R})\leq\alpha-\epsilon$.
\end{enumerate}
\end{definition}
\begin{theorem}
\label{thm:consistency}
Let $\{X_n\}_{n \geq 1}$ and $\{Y_n\}_{n \geq 1}$ be two sequences of
matrices of latent positions for the latent position graphs with link functions $g$ and $h$, respectively. Suppose that, as $n \rightarrow
\infty$, these latent positions and associated link functions satisfy Assumptions $1$ through $4$ together with
Assumption~$6$. 
For each fixed $n$, consider the hypothesis test in Definition~\ref{def:consistency}
for the $X_n$ and $Y_n$. Define the test
statistic $T_n(\tilde{P},\tilde{Q})$ as in Eq.\eqref{test1}. Let $\alpha\in(0,1)$ be given. If the rejection
region is $\mathcal{R}=\{t\in\mathbb{R}:t< C\}$ for some constant
$C\leq 1$, then there exists an $n_0=n_0(\alpha,\epsilon)\in
\mathbb{N}$ such that for all $n\geq n_0$, the test procedure with
$T_n$ and the rejection region $\mathcal{R}$ is an at most level
$\alpha$ test, that is, if the null hypothesis $H_0$ is true, then
$\mathrm{pr}(T_n\in\mathcal{R})\leq \alpha-\epsilon$. Denote by
$$d_n= \min_{s, W, t} \|X_n-sY_nW-1t^{\top}\|_F$$
the minimum Frobenius norm distance, up to some similarity transformation, between $X_n$ and $Y_n$.
Then the test procedure is consistent in the sense of Definition 2 over this
sequence of latent positions if, as $n \rightarrow \infty$, $\liminf
n^{-1/2} d_nI\{d_n>0\}>0$ where $I(\cdot)$ is the indicator function.
\end{theorem}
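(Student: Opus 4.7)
The plan is to decompose the argument into a level-under-$H_0$ part and a consistency-under-$H_a$ part, each leveraging one of Theorem~\ref{thm:main_theorem} and Theorem~\ref{thm:converse}, together with the trivial observation that $T_n(\cdot,\cdot)\leq 1$ since it is a Spearman rank correlation.

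For the level, I would first observe that under $H_0$ we may write $X_n = s Y_n W + 1 t^\top$, which gives $x_i - x_j = s W^\top(y_i - y_j)$ and hence $\|x_i - x_j\| = |s|\,\|y_i - y_j\|$ for every pair $(i,j)$. Because both link functions $h$ and $g$ are strictly monotone decreasing, the induced orderings of $\{p_{ij}\}$ and $\{q_{ij}\}$ agree (ties included), so $T_n(P,Q) = 1$ exactly. Combined with Theorem~\ref{thm:main_theorem} this gives $T_n(\tilde{P},\tilde{Q}) \to 1$ in probability. For any fixed $C<1$ the null rejection probability $\mathrm{pr}\{T_n(\tilde{P},\tilde{Q}) < C\} \to 0$, so choosing $n_0$ large enough yields $\mathrm{pr}(T_n \in \mathcal{R}) \leq \alpha - \epsilon$ under $H_0$.

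For consistency, suppose $\liminf_{n\to\infty} n^{-1/2} d_n I\{d_n > 0\} > 0$, so that $d_n \geq c n^{1/2}$ for some $c>0$ and all large $n$. I would argue by contrapositive: if $\limsup_n T_n(P,Q) = 1$, extract a subsequence $\{n_k\}$ along which $T_{n_k}(P,Q) \to 1$. Assumptions 1--4 and 6 continue to hold along the subsequence, so Theorem~\ref{thm:converse} produces similarity transformations realizing $d_{n_k} = o(n_k^{1/2})$, contradicting $d_{n_k} \geq c n_k^{1/2}$. Hence $L^* := \limsup_n T_n(P,Q) < 1$. Picking $C \in (L^*, 1)$, we have $T_n(P,Q) \leq (L^* + C)/2 < C$ for all sufficiently large $n$, and one more application of Theorem~\ref{thm:main_theorem} gives $\mathrm{pr}\{T_n(\tilde{P},\tilde{Q}) < C\} \to 1$, which is the required consistency.

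The chief obstacle I anticipate is the second step: lifting the purely qualitative contrapositive of Theorem~\ref{thm:converse} into a uniform bound that lets us fix a single threshold $C$ valid for all large $n$. This requires checking that Theorem~\ref{thm:converse} can be invoked along an arbitrary subsequence (its proof presumably uses no property of $n$ beyond the stated assumptions), and then positioning $C$ strictly between $L^*$ and $1$ so that neither the $o_p(1)$ slack from Theorem~\ref{thm:main_theorem} nor the separation gap at $L^*$ is lost. In particular the constant $C$ may need to depend on the sequence $(X_n,Y_n)$ through $L^*$; the theorem statement accommodates this by phrasing $C$ as some constant rather than a universal one.
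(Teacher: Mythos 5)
Your proposal is correct and follows essentially the route the paper intends: the paper gives no separate written proof of this theorem, instead deriving it implicitly from the observation that $T_n(P,Q)=1$ under $H_0$ (stated just before Theorem~\ref{thm:converse}) combined with Theorem~\ref{thm:main_theorem} for the level and the contrapositive of Theorem~\ref{thm:converse} for the power. Your closing caveats — that $C$ must lie strictly between $\limsup_n T_n(P,Q)$ and $1$ and may depend on the sequence, and that Theorem~\ref{thm:converse} must be invocable along subsequences — are exactly the points the paper glosses over, and your handling of them is sound.
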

\begin{remark}
  In Theorem~\ref{thm:consistency}, $\alpha$ need not depend on $C$ since we have not derived a
  non-degenerate limiting distribution for our test
  statistic. Theorem~\ref{thm:consistency} indicates that, for
  sufficiently large $n$, our test procedure has power arbitrarily
  close to $1$ whenever the minimum Frobenius
  norm distance between $X_n$ and any similarity transformation of $Y_n$ is of order $\Omega(n^{1/2})$. Thus, roughly speaking, the test
  procedure has power converging to $1$ if there does not exists a similarity
  transformation mapping the rows of $X_n$ to that of $Y_n$; see the discussion prior to the statement of Theorem~\ref{thm:converse}.
\end{remark}

\section{Simulations}
\subsection{General Procedure}
We first summarize the setup and general procedure used for generating the empirical distributions of our test statistic.
\begin{enumerate}[label=(\alph*)]
	\item For $i=1,...,n$ and $j=1,2$, generate $x_{ij}\overset{iid}{\sim} N(0,1)$ and form $X=(x_{ij})\in\mathbb{R}^{n\times 2}$. 
	\item We set different $Y$ for null and alternative hypotheses:
	\begin{itemize}
		\item Under $H_0$: Set $Y=(1+\epsilon)X$.
		\item Under $H_a$: Set $Y=X+Z$ where $Z=(z_{ij})\in\mathbb{R}^{n\times 2}$ and $z_{ij}\overset{iid}{\sim}  N(0,\epsilon)$ independent from $x_{ij}$.
	\end{itemize}
	\item The edge-probability matrices based on $X$ and $Y$ are respectively defined as $P=(p_{ij})\in\mathbb{R}^{n\times n}$ and $Q=(q_{ij})\in\mathbb{R}^{n\times n}$, where
		\begin{align*}
			p_{ij}=h({x}_i,{x}_j)&=\exp(-\|{x}_i-{x}_j\|^2), \qquad
			q_{ij}=g({y}_i,{y}_j)&=\exp(-\|{y}_i-{y}_j\|^2/4).
		\end{align*}
	\item Generate the corresponding adjacency matrices $A$ and $B$ as $A_{ii}=B_{ii}=0$ for $i=1,...,n$ and $A_{ij}=\text{Bernoulli}(\rho p_{ij})$, $B_{ij}=\text{Bernoulli}(\rho q_{ij})$ for $i,j=1,...,n$ and $i\neq j$.
	\item Apply universal singular value thresholding on $A$ and $B$ to get the estimates of $P$ and $Q$ as $\hat{P}$ and $\hat{Q}$.
	\item Calculate the test statistic $T_n(\hat{P},\hat{Q})$.
	\item Repeat (d)-(f) 100 times or use other resampling techniques to get the empirical distribution of $T_n(\hat{P},\hat{Q})$.
\end{enumerate}
\subsection{Experiments} We first show that our proposed test
procedure exhibits power for small and moderate values of $n$ in
Simulation 1. We then study the performance of the permutation test
and bootstrap procedure in Simulation 2. Finally we compare our test
procedure with another procedure that is based on non-metric embedding
of the adjacency matrices. An additional simulation on sparsity and
its effects on our test procedure is included in the supplementary
materials.

\textbf{\textit{Simulation 1: Power.}} This simulation is designed to
investigate power of the proposed test as the number of
vertices vary and for different settings of the latent positions. Set $K = 3$ in the singular
value thresholding procedure, sparsity level $\rho=1$,
$n\in\{50,100,200,500,1000\}$ and significant level $\alpha
=0.05$. Recall the two settings of latent positions are
\begin{itemize}
    \item $M_1$: $Y=(1+\epsilon)X$.
    \item $M_2$: $Y=X+Z$ where $Z=(z_{ij})\in\mathbb{R}^{n\times 2}$ and $z_{ij}\overset{iid}{\sim} N(0,\epsilon)$ independent from $x_{ij}$.
\end{itemize} Set $\epsilon\in\{0,0.02,0.1,0.2,0.5\}$ and note that
the null hypothesis is true under $M_1$ for all values of
$\epsilon$. In contrast, the null hypothesis is true under $M_2$ if
and only if $\epsilon = 0$. The power for
different settings, reported in Table \ref{powertable}, is calculated
based on the empirical distribution generated by the procedure
outlined in Section 4.1.
\begin{table}
\centering
\def~{\hphantom{0}}
\caption{Power of the proposed test ($\alpha=0.05$).}{
\begin{tabular}{ccccccc}
  Setting&$n$&$\epsilon=0$&$\epsilon=0.02$&$\epsilon=0.1$&$\epsilon=0.2$&$\epsilon=0.5$\\
  \multirow{5}{*}{$M_1$}&50&0.05&0.04&0.02&0&0\\
  &100&0.05&0.04&0&0&0\\
  &200&0.05&0.02&0&0&0\\
  &500&0.05&0&0&0&0\\
  &1000&0.05&0&0&0&0\\
  \multirow{5}{*}{$M_2$}&50&0.05&0.35&1&1&1\\
  &100&0.05&1&1&1&1\\
  &200&0.05&1&1&1&1\\
  &500&0.05&1&1&1&1\\
  &1000&0.05&1&1&1&1\\
 \end{tabular}
 }
\label{powertable}
\end{table}

According to Table 1, our testing procedure is valid. As $M_1$
satisfies the null hypothesis, the power of the proposed test is
approximately $0$. While the test appears to be slightly conservative,
this is due mainly to the fact that the edge-probability matrices for
the two samples are quite different when $\epsilon > 0$, e.g., when
$\epsilon = 0.1$ and $n=50$ the average edge density for the two
populations are 0.23 and 0.40. This difference impacts the
finite-sample estimation $\hat{P}$ and $\hat{Q}$ and the resulting
test statistic $T_n(\hat{P}, \hat{Q})$. The test procedure also
exhibits power even for small values of $\epsilon$ and moderate values
of $n$, e.g., for the setting $M_2$ we see that
the empirical power of the proposed test is $1$ except for $\epsilon=0.02$ and $n=50$ where the
difference between the latent positions is miniscule and the sample size
is small.

\textbf{\textit{Simulation 2: Permutation test and bootstrap.}}
Simulation 1 simply resampled data from the distribution under the
null hypothesis. This is appropriate in simulation studies but does
not yield a valid test procedure in practice. To get a valid test procedure, we
consider other resampling techniques. This simulation
is designed to understand the performance of permutation test and
bootstrap procedure in our test. We set $K = 3$ in the singular value
thresholding procedure and set the sparsity level $\rho=1$. The latent positions are set to be $Y=X+Z$
where $Z=(z_{ij})\in\mathbb{R}^{n\times 2}$ and
$z_{ij}\overset{iid}{\sim} N(0,\epsilon)$ and
$\epsilon\in\{0,0.02,0.1,0.2,0.5,1\}$.

For the permutation test, each sample consists of 100 adjacency matrices from that population. We apply discretization on the estimated edge-probability matrices with $\eta=0.05$. Meanwhile for the bootstrap test, each sample consists of a single graph from that population.
\begin{table}
\centering
\def~{\hphantom{0}}
\caption{Power of 100 replications of 1000 Permutation Test ($\alpha=0.05$).}{
\begin{tabular}{ccccccc}
  $n$&$\epsilon=0$&$\epsilon=0.02$&$\epsilon=0.1$&$\epsilon=0.2$&$\epsilon=0.5$&$\epsilon=1$\\
  100&0&0&1&1&1&1\\
  200&0&0&1&1&1&1\\
  500&0&0&1&1&1&1\\
  \end{tabular}
  }
  \label{tbl:permutation}
\end{table}
\raggedbottom
\begin{table}
\centering
\def~{\hphantom{0}}
\caption{Power of 100 replications of 1000 Bootstrapping ($\alpha=0.05$).}{
\begin{tabular}{ccccccc}
  $n$&$\epsilon=0$&$\epsilon=0.02$&$\epsilon=0.1$&$\epsilon=0.2$&$\epsilon=0.5$&$\epsilon=1$\\
  20&0.11&0.12&0.26&0.44&0.67&0.87\\
  50&0&0.01&0.25&0.58&1&1\\
  100&0.01&0&0.97&1&1&1\\
  200&0&0.43&1&1&1&1\\
  \end{tabular}
}
  \label{tbl:bootstrap}
\end{table}

Tables~\ref{tbl:permutation} and \ref{tbl:bootstrap} show that the permutation test and the bootstrapping procedure both exhibit power even for small values of $n$, provided that the discrepancy between the latent positions as captured by $\epsilon$ is not too small. Indeed, even though both approaches are conservative for $n \geq 100$, the power of the test is approximately $1$ for all $\epsilon \geq 0.1$. 


\textbf{\textit{Simulation 3: Comparison with non-metric multidimensional scaling.}} We next perform a simulation study to compare our test procedure with the test procedure in \cite{xixihu} that is based on embedding the adjacency matrices via non-metric multidimensional scaling. More specifically, given a $n \times n$ {\em weighted} adjacency matrix $A$ and an embedding dimension $d$, non-metric multidimensional scaling seeks to find a collection of points $x_1, \dots, x_n$ in $\mathbb{R}^{d}$ such that the pairwise distances between the $\{x_i\}$ best preserve the pairwise ordering among the entries of $A$, i.e., $\|x_i - x_j\| \leq \|x_k - x_{\ell}\|$ if and only if $a_{ij} \geq a_{k \ell}$; see Chapter 8 of \cite{borg_groenen} for a more detailed overview of non-metric embedding. Given the two collection of graphs, the test procedure in \cite{xixihu} first embed the sample means for each collection using non-metric multidimensional scaling. This yields two $n \times d$ matrices $\hat{X}$ and $\hat{Y}$. The test statistic is given by the Procrustes error
$T(\hat{X}, \hat{Y}) = \min\|\hat{X} - s \hat{Y} W - 1 t^{\top}\|_{F}$
where the minimum is over all scalar $s \in \mathbb{R}$, orthogonal matrix $W \in \mathbb{R}^{d \times d}$ and vector $t \in \mathbb{R}^{d}$. 

Table~\ref{sim4_t1} compares the finite-sample performance of the two test procedures for graphs generated using the same settings as that of Tables~\ref{tbl:permutation}. Table~\ref{sim4_t1} indicates that our test procedure is substantially more powerful than the non-metric embedding test procedure, e.g., compare the power of the two procedures for $\epsilon \leq 0.2$. 
\begin{table}
\centering
\def~{\hphantom{0}}
\caption{Power of the proposed test and non-metric multidimensional scaling  ($\alpha=0.05$).}{
\begin{tabular}{cccccccc}
  $n$&Method&$\epsilon=0$&$\epsilon=0.02$&$\epsilon=0.1$&$\epsilon=0.2$&$\epsilon=0.5$&$\epsilon=1$\\
  \multirow{2}{*}{20}&Proposed test&0.05&0.09&0.44&0.98&1&1\\
  &Non-metric embedding&0.05&0.09&0.11&0.70&0.61&0.86\\
  \multirow{2}{*}{50}&Proposed test&0.05&0.92&1&1&1&1\\
  &Non-metric embedding&0.05&0.24&0.35&0.50&0.98&1\\
  \multirow{2}{*}{100}&Proposed test&0.05&1&1&1&1&1\\
  &Non-metric embedding&0.05&0.10&0.19&0.28&0.70&1\\
  \multirow{2}{*}{200}&Proposed test&0.05&1&1&1&1&1\\
  &Non-metric embedding&0.05&0.06&0.11&0.14&0.51&1\\
  \end{tabular}
  }
\label{sim4_t1}
\end{table}

\section{Empirical Studies}
\subsection{Application 1: connectome data across life span} In this
application, we are interested in determining whether the structural
brain networks of healthy individuals change across their life
span. We used a dataset from \cite{faskowitz2018weighted} where each
network represents connections between $131$ brain regions of interest
and the edges are constructed based on the number of streamlines
connecting these regions. There are in total $622$ networks. The age
for each of the $622$ subjects ranges from $7$ to $85$ years old.

To conduct two-sample comparison, we divide the sample into 3
subgroups according to the subjects' ages, i.e., a young-adult group with ages
in $[18,35]$, a middle-aged group with ages in $(35,56]$ and
an old-adult group with ages in $(56,85]$. The sample sizes for each subgroup
are $171$, $173$ and $207$ graphs, respectively. The number of
vertices in each graph is $131$ and the average edge
densities for the middle-age and old-adult groups are $0.89$ and
$0.95$ that of the young-adult group, respectively.

We then construct pairwise two-sample comparisons between these three
age groups. The associated $p$-values and empirical distributions of
the test statistics for the various null hypotheses are obtained by
permutation test and are illustrated in Figure~\ref{fig:age}. We apply universal
singular value thresholding with dimension $K=3$ chosen according to a
dimension selection algorithm in \cite{zhu_ghodsi}.

\begin{figure}[H]
\centering
\begin{subfigure}{0.32\textwidth}
\includegraphics[width=\textwidth,height=1.375in]{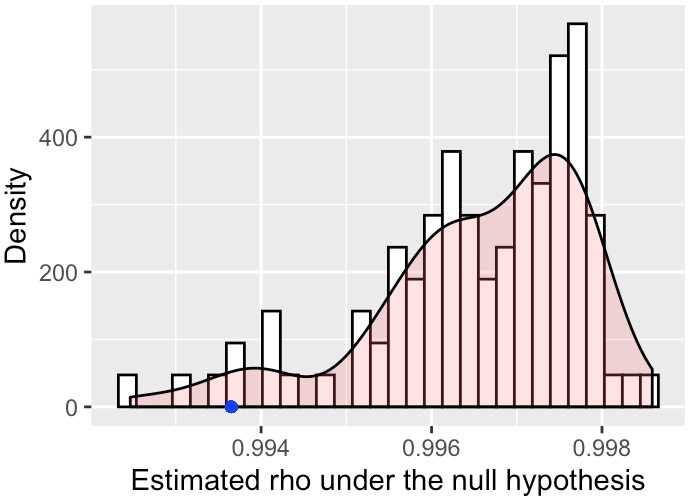}
\caption{$H_0:\textbf{X}_{\text{Young}}=\textbf{X}_{\text{Mid}}$.\\ $T_n=0.9937$ and $p$-value is 0.04.}
\end{subfigure}
\begin{subfigure}{0.32\textwidth}
\includegraphics[width=\textwidth,height=1.375in]{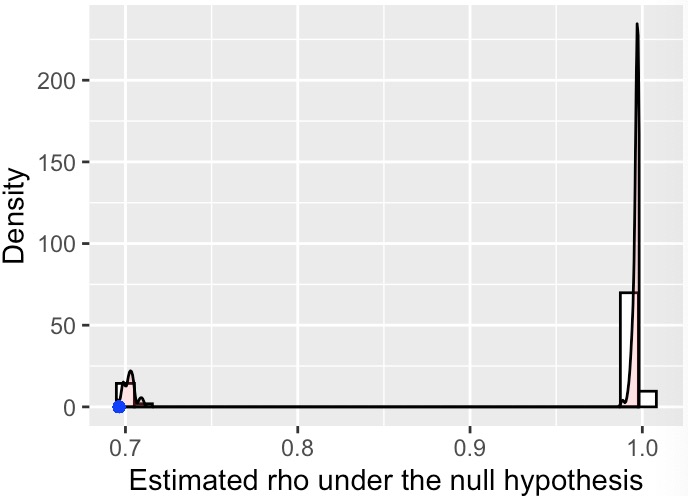}
\caption{$H_0:\textbf{X}_{\text{Mid}}=\textbf{X}_{\text{Old}}$.\\ $T_n=0.6963$ and $p$-value is 0.01.}
\end{subfigure}
\begin{subfigure}{0.32\textwidth}
\includegraphics[width=\textwidth,height=1.38in]{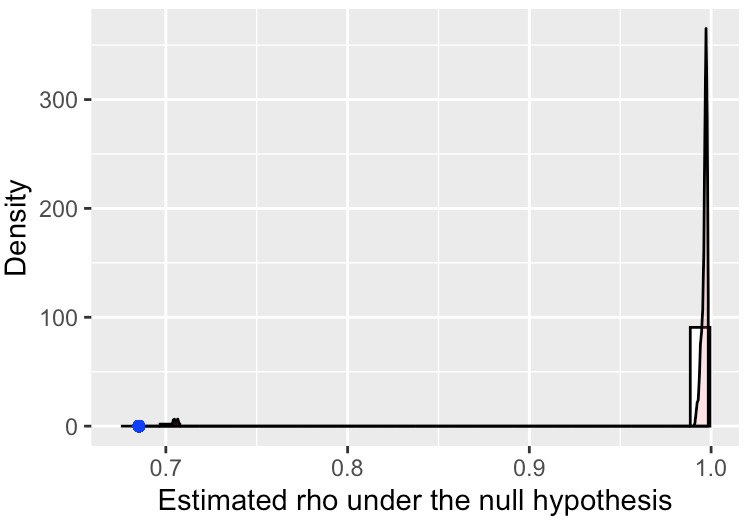}
\caption{$H_0:\textbf{X}_{\text{Young}}=\textbf{X}_{\text{Old}}$.\\ $T_n=0.6851$ and $p$-value is 0.}
\end{subfigure}
\caption{Density Plot of Estimated Test Statistic Under Different $H_0$.}
\label{fig:age}
\end{figure}

The $p$-values given in Figure~\ref{fig:age} are marginal $p$-values
and had not been corrected for multiple comparisons. Applying
Bonferroni correction with significance level $0.05/3\approx0.017$, we
fail to reject the null hypothesis that there is no difference between
the young and the middle-aged group; the value of the test statistic
for this comparison is $T \approx 0.994$. In contrasts, we reject the
null hypothesis for the comparison of young against old and the
comparison of middle-aged against old.  The histograms in Figure~1
also indicate that the empirical distribution of the permutation test
statistic for comparing young and the middle-aged group is tightly
concentrated at $1$, once again suggesting that these two groups are
quite similar.

\subsection{Application 2: epileptogenic data on recording region and
brain state} The second application is on networks constructed from
epileptogenic recordings of patients with epileptic seizure
\cite{andrzejak2001indications}. The data is available from UCI
Machine Learning Repository
(http://archive.ics.uci.edu/ml/datasets.php). There are $500$ subjects
whose brain activity was recorded, with the epileptogenic recording of
each person being divided into $23$ one-second snapshots containing
$178$ time points. The data is arranged as a matrix with $23\times
500=11500$ rows and $178$ columns. The $11500$ observations are
classified into five classes; these classes are numbered from $1$
through $5$ and correspond to recordings with seizure activity, an
area with tumour, a healthy brain area, subject with eyes open and
subject with eyes closed. It was noted in
\cite{andrzejak2001indications} that all subjects whose recordings
are classified as classes $2$ through $5$ are subjects who did not
have epileptic seizure and that only subjects in class $1$ have
epileptic seizure. Most analysis of this data have thus been binary
classification, namely discriminating class $1$ with epileptic seizure
against the rest.

We constructed networks by thresholding the autocorrelation matrices
of the epileptogenic data using a procedure similar to that in
\cite{ghoshdastidar2018practical}. Each class is randomly divided into
four parts with equal size. We then compute the autocorrelation
matrices for each part and set the diagonal elements to be
0. Unweighted adjacency matrices are then obtained by thresholding the
largest $10\%$ of the correlation entries to $1$ with the remaining
entries being $0$. The above steps result in $20$ adjacency matrices,
with $4$ from each class. Each adjacency matrix corresponds to a graph
on $n=178$ vertices.

We then compare, using our test procedure, the graphs from class $1$
against the graphs from class $j \geq 2$. The results are summarized
in Table~\ref{app2}. The $p$-values in the table are calculated using
permutation test. We see from Table \ref{app2} that class $1$ is
significantly different from the remaining classes and that this
difference is not too sensitive to the choice of dimension $K$ in the
singular value thresholding step.

\begin{table}
\centering
\def~{\hphantom{0}}
\caption{$p$-values for epileptogenic correlation networks. }{%
    \begin{tabular}{cccccc}
         $K$&$A_1$ vs $A_1$&$A_1$ vs $A_2$&$A_1$ vs $A_3$&$A_1$ vs $A_4$&$A_1$ vs $A_5$ \\
         2&0.80&0.99&0.12&0.04&0.01\\
         3&0.80&0.72&0.01&0.00&0.29\\
         4&0.79&\textbf{0.01}&\textbf{0.00}&\textbf{0.04}&\textbf{0.01}\\
         5&0.88&\textbf{0.01}&\textbf{0.00}&\textbf{0.00}&\textbf{0.01}\\
         6&0.80&\textbf{0.04}&\textbf{0.00}&\textbf{0.01}&\textbf{0.02}\\
         7&0.94&\textbf{0.03}&\textbf{0.00}&\textbf{0.01}&\textbf{0.02}\\
         8&0.83&0.05&0.00&0.00&0.03\\
    \end{tabular}}
    \label{app2}
\end{table}


\section{Discussions}
In summary, the test statistic constructed based on universal singular value thresholding and Spearman's rank correlation coefficient yields a valid and consistent test procedure for testing whether two latent distance random graphs on the same vertex set have the similar generating latent positions. A few related questions will be left for future research.

Firstly, for the two-sample hypothesis test we study, one
can also develop test statistics using other techniques, for example, isotonic regression. As we briefly introduced in Section~\ref{sec:methodology}, when the null hypothesis is true then there exists a monotone function $f$ such that $p_{ij}=f(q_{ij})$. Thus, given graphs from the latent distance model, we can first estimate $\hat{P}$ and $\hat{Q}$ and then fit a regression model of the form $\hat{p}_{ij} = f(\hat{q}_{ij}) + \epsilon_{ij}$ for some nonparametric function $f$. The two-sample testing problem can then be reformulated as testing for whether $f$ is monotone. It appears, however, that testing for monotonicity against a general alternative is still an open problem in nonparametric regression.

The second question concerns the rate of convergence of our test statistic and the class of alternatives for which the test procedure is consistent against. In particular Theorem~\ref{thm:consistency} shows that the test procedure is consistent for class of alternatives where the distance between the 
collections of latent positions diverges with rate $\Omega(n^{1/2})$. Relaxing this condition will require careful analysis of the estimation error in the singular value thresholding procedure as well as convergence rate for non-metric embedding.

Finally, the critical region for our test procedures are determined using resampling methods such as permutation test or bootstrapping graphs from the estimated edge probabilities.
The validity of these resampling techniques are justified by the
empirical simulation studies as well as real data
analysis. Nevertheless, our test procedure could be more robust if we
are able to derive the limiting distribution of the test statistic and
thereby obtain approximate critical values. We surmise, however, that
this will be a challenging problem. Indeed, while the limiting
variance and distribution of Spearman’s rank correlation in the case
of independent and identically distributed data are well known, see e.g., \cite{kendall1948rank} and \cite{ruymgaart}, the entries in our estimates $\hat{P}$ and $\hat{Q}$ are not independent, and furthermore the original entries of $P$ and $Q$ are not identically distributed. 

\bibliographystyle{apalike}
\bibliography{manuscript}
\newpage
\section{Supplementary}

This supplementary file contains proofs of the theoretical results in the main paper and an additional simulation experiment.

\subsection{Proofs}
In this section, we will show the detailed proof of Theorem 3.1 and 3.2. Before that, let us recall the necessary notations and assumptions given in Section 3 of the main paper.
\vskip 0.5cm
\noindent\textbf{Notations.} Let $n$ be the number of vertices. $P\in\mathbb{R}^{n\times n}$ is the binary symmetric edge-probability matrix. $R(P)$ is a symmetric matrix measuring ranks corresponding to $P$. Denote universal singular value thresholding  estimated edge-probability matrix as $\hat{P}$. For $\eta>0$ and $(\eta^2\rho)^{-1}=o(n)$, we define the discretization as $\tilde{p}_{ij}=\ceil* {\hat{p}_{ij}/\eta}\times\eta$.
Let $\|\cdot\|_F$ be the Frobenius norm.
\vskip 0.5cm

\noindent\textbf{Assumptions.} We made several required assumptions as follows.
\begin{enumerate}
    \item As $n\to\infty$, for any $\epsilon>0$, there exists $\delta>0$ such that $[0,1)$ can be partitioned into the union of intervals of the form $[(k-1)\delta,k\delta)$ for $k=1,...,\lceil 1/\delta \rceil$, such that, for any $k$, one of the following two conditions holds almost surely:
\begin{enumerate}[label=(\roman*)]
	\item Either the number of $ij$ pairs with $i < j$ and $p_{ij}\in[(k-1)\delta,k\delta)$ is at most $n(n-1)\epsilon/2$.
	\item Or if the number of $ij$ pairs with $i < j$ and $p_{ij}\in[(k-1)\delta,k\delta)$ exceeds $n(n-1)\epsilon/2$, then they are all equal for $p_{ij} \in [(k-1)\delta,k\delta)$.
\end{enumerate}
\item Define $\hat{\sigma}\{R(P)\}=\Bigl[{n\choose2}^{-1}\sum_{i<j}\Bigl\{R(p_{ij})-{n\choose2}^{-1}\sum_{i<j}R(p_{ij})\Bigr\}^2\Bigr]^{1/2}$ and $\hat{\sigma}\{R(Q)\}$ similarly. It holds that 
$\hat{\sigma}\{R(P)\}=\Omega(n^2)$ and $\hat{\sigma}\{R(Q)\}=\Omega(n^2)$.
\item The link functions $h$ and $g$ are infinitely many times differentiable.
\item Every edge is observed independently with probability $\rho\in(0,1]$, where there exists a positive constant $C$ such that $n\rho\geq C\log n$.
\item There exists a constant $c>0$ independent of $\delta$ such that $|\{(i,j):p_{ij}\in [(k-1)\delta,k\delta]\}|\leq c\cdot \delta{\tbinom{n}{2}}$.
\item Let $U\subset\mathbb{R}^d$ and $V \subset \mathbb{R}^{d}$, be
  bounded and connected sets. Let $\Omega_n=\{{x}_1,...,{x}_n\}\subset U$ and
  $\Xi_{n} = \{y_1, y_2, \dots, y_n\} \subset V$. Then $\lim_{n
    \rightarrow \infty} \Omega_n$
  is dense in $U$ and $\lim_{n \rightarrow \infty} \Xi_n$ is
  dense in $V$. Furthermore, for any $\epsilon > 0$
  there exists $\delta_U = \delta_U(\epsilon) > 0$ and $\delta_V = \delta_V(\epsilon) > 0$ such that
  \begin{gather*}
  n^{-1} \liminf |B(x,\epsilon) \cap \Omega_n| \geq \delta_U, \qquad
  \text{for all $x \in U$}. \\
  n^{-1}\liminf |B(y,\epsilon) \cap \Xi_n| \geq \delta_V, \qquad
  \text{for all $y \in V$}.
  \end{gather*}
  Here $B(x, \epsilon)$ denote the ball of radius $\epsilon$ around $x
  \in \mathbb{R}^{d}$. 
 \end{enumerate}

We now prove Theorem 3.1.
\vskip 0.5cm
\noindent\textsc{Theorem 3.1} 
Assume Assumptions 1--4 hold. Then for sufficiently large $n$,
$$T_n(\tilde{P},\tilde{Q})-T_n(P,Q)=o_p(1).$$
Here $\tilde{P}$ and $\tilde{Q}$ are the $\eta$-discretization of $\hat{P}$ and $\hat{Q}$ with $(\eta^2 \rho)^{-1} = o(n)$ as $n \rightarrow \infty$. 

\begin{proof}[of Theorem 3.1]
From Theorem 1 in \cite{xu2017rates}, along with the conditions in
Assumptions 3 and 4, we have
\begin{align}
\label{universal singular value thresholding }
  \|\hat{P}-P\|^2_F=O_p\Bigl\{\frac{n\log^d(n\rho)}{\rho}\Bigr\},
\end{align} where $d$ is the dimension of latent positions. Let $\eta
> 0$ be such that $(\eta^2 \rho)^{-1} = o(n)$. Define
$S=\Bigl\{(i,j):|\hat{p}_{ij}-p_{ij}|>\eta\Bigr\}$. 
Then by (\ref{universal singular value thresholding }), we have
$$|S|=O\Bigl[n\Bigl\{\log^d(n\rho)\Bigr\}/(\eta^2\rho)\Bigr]=o(n^2).$$
Recall that our test statistic, using the true $P$ and $Q$, is
\begin{align*}
T_n(P,Q)&=\frac{\mathrm{cov}\{R(P),R(Q)\}}{\hat{\sigma}\{R(P)\}\hat{\sigma}\{R(Q)\}},
\end{align*}
where\begin{align*}
 	\mathrm{cov}\{R(P),R(Q)\}&=\tbinom{n}{2}^{-1}\sum_{i,j}R(p_{ij})R(q_{ij})-\Bigl\{\tbinom{n}{2}^{-1}\sum_{i,j}R(p_{ij})\Bigr\}\times \Bigl\{\tbinom{n}{2}^{-1}\sum_{i,j}R(q_{ij})\Bigr\},\\
 	\hat{\sigma}\{R(P)\}&=\Bigl[\tbinom{n}{2}^{-1}\sum_{i,j}\Bigl\{R(p_{ij})-\tbinom{n}{2}^{-1}\sum_{i,j}R(p_{ij})\Bigr\}^2\Bigr]^{1/2},\\
 	\hat{\sigma}\{R(Q)\}&=\Bigl[\tbinom{n}{2}^{-1}\sum_{i,j}\Bigl
\{R(q_{ij})-\tbinom{n}{2}^{-1}\sum_{i,j}R(q_{ij})\Bigr\}^2\Bigr]^{1/2}.
 \end{align*}
The test statistic $T_n(\tilde{P}, \tilde{Q})$ using the discretized
estimates is defined analogously. We then have
\begin{equation*}
  \begin{split}
T_n(\tilde{P},\tilde{Q})-T_n(P,Q)
&=\frac{\mathrm{cov}\{R(\tilde{P}),R(\tilde{Q})\}}{\hat{\sigma}\{R(\tilde{P})\}\hat{\sigma}\{R(\tilde{Q})\}}-\frac{\mathrm{cov}\{R(P),R(Q)\}}{\hat{\sigma}\{R(P)\}\hat{\sigma}\{R(Q)\}}\\
	&=\underbrace{\frac{\mathrm{cov}\{R(\tilde{P}),R(\tilde{Q})\}-\mathrm{cov}\{R(P),R(Q)\}}{\hat{\sigma}\{R(\tilde{P})\}\hat{\sigma}\{R(\tilde{Q})\}}}_{\text{Part I}}\\
	&+\underbrace{\mathrm{cov}\{R(P),R(Q)\}\Bigl[\frac{1}{\hat{\sigma}\{R(\tilde{P})\}\hat{\sigma}\{R(\tilde{Q})\}}-\frac{1}{\hat{\sigma}\{R(P)\}\hat{\sigma}\{R(Q)\}}\Bigr]}_{\text{Part II}}.
\end{split}
\end{equation*}
We control Part I and Part II via the following two lemmas. 
\begin{lemma} Under assumptions 1,3 and 4, for any $\epsilon>0$, there exists a positive constant $C$ such that
$$\Bigl|\sum_{i<j}R(\tilde{p}_{ij})\sum_{i<j}R(\tilde{q}_{ij})-\sum_{i<j}R(p_{ij})\sum_{i<j}R(q_{ij})\Bigr|\leq Cn^8\epsilon.$$
\end{lemma}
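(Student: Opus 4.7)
The plan is to apply the product-difference decomposition
\[
AB - CD = A(B-D) + (A-C)D,
\]
where $A = \sum_{i<j} R(\tilde{p}_{ij})$, $B = \sum_{i<j} R(\tilde{q}_{ij})$, $C = \sum_{i<j} R(p_{ij})$ and $D = \sum_{i<j} R(q_{ij})$. Each of these is a sum of $\binom{n}{2}$ rank entries, each bounded by $\binom{n}{2}$, so $|A|, |B|, |C|, |D| \leq \binom{n}{2}^2 = O(n^4)$ trivially. It therefore suffices to bound $|A-C|$ and $|B-D|$ separately by $O(n^4 \epsilon)$, since then
\[
|AB - CD| \leq |A|\cdot|B-D| + |A-C|\cdot|D| = O(n^8 \epsilon),
\]
which is exactly the asserted $C n^8 \epsilon$ bound.

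To control $|A-C|$, I would combine two ingredients. First, Theorem~1 of \cite{xu2017rates} applies under Assumptions~3 and~4 and gives $\|\hat{P} - P\|_F^2 = O_p\{n\log^d(n\rho)/\rho\}$, whence $|S| = O\{n\log^d(n\rho)/(\eta^2 \rho)\} = o(n^2)$ for $S := \{(i,j):|\hat{p}_{ij}-p_{ij}|>\eta\}$, so $|S| \leq n^2 \epsilon$ for $n$ large. Splitting $|A-C|\le\sum_{i<j}|R(\tilde{p}_{ij})-R(p_{ij})|$ by membership in $S$, pairs in $S$ contribute at most $|S|\cdot\binom{n}{2}=O(n^4\epsilon)$; for $(i,j)\notin S$ one has $|\tilde{p}_{ij}-p_{ij}|\le 2\eta$, so a rank flip with some other pair $(k,l)$ can occur only if $(k,l)\in S$ or $|p_{ij}-p_{kl}|=O(\eta)$. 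Second, Assumption~1 applied with $\delta$ of order $\eta$ provides the rare/constant-bin dichotomy for the partition of $[0,1)$: rare bins contain at most $n(n-1)\epsilon/2$ pairs each and contribute $O(n^4\epsilon)$ via the coarse per-pair bound, whereas constant bins consist of already-tied $p$-values whose internal re-labelling induced by discretization preserves the aggregate mid-rank sum.

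The main obstacle is the constant-bin bookkeeping, namely verifying that partial untying of a tied cluster that straddles an $\eta$-discretization boundary does not shift $\sum R$ by more than $O(n^4 \epsilon)$. Under the strict mid-rank convention adopted in the paper (cf.\ Remark~2, where all tied $p_{ij}=0.1$ share the common mid-rank $[\binom{n}{2}+1]/2$), the sum $\sum_{i<j} R(x_{ij}) = \binom{n}{2}[\binom{n}{2}+1]/2$ is in fact a deterministic constant depending only on $n$, so $A = C$ and $B = D$ exactly and the lemma holds identically; the quantitative argument above nevertheless provides a robust version that accommodates any rank convention for which small perturbations of the data induce only first-order shifts of the rank sum. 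Applying the analogous bound for $|B-D|$ and plugging back into the decomposition completes the proof.
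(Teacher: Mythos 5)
Your proof is correct and follows essentially the same route as the paper: a product-difference decomposition of $AB-CD$, combined with the per-pair rank perturbation bound $|R(\tilde{p}_{ij})-R(p_{ij})|\leq 5\binom{n}{2}\epsilon$ off the exceptional set $S$ (derived from Assumption~1, and supplied in the paper by its Lemma~3) and the crude bound $|S|=o(n^2)$ on $S$ from the singular-value-thresholding error under Assumptions~3--4; your version, which splits only the linear differences $|A-C|$ and $|B-D|$ over $S$ versus $S^c$ rather than splitting the product itself, is if anything slightly cleaner since it avoids the cross terms the paper's split of the product leaves implicit. Your closing observation is also correct and worth recording: under the mid-rank convention the paper itself adopts, $\sum_{i<j}R(\cdot)$ equals the deterministic constant $\binom{n}{2}\bigl\{\binom{n}{2}+1\bigr\}/2$ regardless of ties, so $A=C$ and $B=D$ identically and the lemma holds with the left-hand side exactly zero --- a simplification the paper's proof does not exploit.
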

\begin{lemma} Under assumptions 1,3 and 4, for any $\epsilon>0$, there exists a positive constant $C$ such that
$$\Bigl|\hat{\sigma}^2\{R(\tilde{P})\}-\hat{\sigma}^2\{R(P)\}\Bigr|\leq C n^4\epsilon.$$
\end{lemma}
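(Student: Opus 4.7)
The plan is to turn the variance difference into a difference of second moments of the ranks, rewrite each increment as a sum of pairwise indicator-flips, and then bound the flips via Assumption~1 and the discretization tolerance $\eta$. Under the average-rank convention the sum $\sum_{i<j}R(p_{ij}) = N(N+1)/2$ with $N = \tbinom{n}{2}$ is invariant under any reordering of the underlying values, so $\bar R(P) = \bar R(\tilde P)$ and the linear terms in the variance cancel:
\begin{equation*}
  \hat{\sigma}^2\{R(\tilde{P})\} - \hat{\sigma}^2\{R(P)\} = \tbinom{n}{2}^{-1}\sum_{i<j}\bigl[R(\tilde{p}_{ij})^2 - R(p_{ij})^2\bigr].
\end{equation*}
Factoring and using $R(\tilde p_{ij}) + R(p_{ij}) \leq 2N = O(n^2)$ reduces the lemma to the pairwise-flip estimate $\sum_{i<j}\bigl|R(\tilde p_{ij}) - R(p_{ij})\bigr| = O(n^4 \epsilon)$, which is also the quantitative engine behind Lemma~1.

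To bound the flip count I would write $|R(\tilde p_{ij}) - R(p_{ij})| \leq \sum_{(k,\ell)}|I(\tilde p_{k\ell} < \tilde p_{ij}) - I(p_{k\ell} < p_{ij})|$ plus a tie correction, and split the $(i,j),(k,\ell)$ pairs according to whether either lies in the exceptional set $S = \{(i,j) : |\hat p_{ij} - p_{ij}| > \eta\}$. The bound of \cite{xu2017rates} combined with $(\eta^2 \rho)^{-1} = o(n)$ forces $|S| = o(n^2)$, so the $S$-contribution is at most $2|S|N = o(n^4)$, already dominated by $n^4\epsilon$ for any fixed $\epsilon$. For $(i,j),(k,\ell)\notin S$ one has $|\tilde p - p| \leq 2\eta$, and a flipped comparison requires $|p_{ij} - p_{k\ell}| \leq 4\eta$.

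Fix $\epsilon>0$, apply Assumption~1 to obtain the $\delta$-partition of $[0,1)$, and pick $\eta$ with $4\eta < \delta$. For $(i,j)$ whose $p_{ij}$ lies in a light interval, the candidate partners within $4\eta$ are contained in at most two consecutive $\delta$-intervals and so number $O(n^2\epsilon)$; summing over all such $(i,j)$ yields $O(N\cdot n^2\epsilon) = O(n^4\epsilon)$. For $(i,j)$ in a degenerate interval, the in-interval comparisons are strict ties and thus contribute nothing to the strict-inequality flip count, while cross-interval contributions fall under the light-interval bound.

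The main obstacle is the within-cluster variance generated by the discretization inside a large degenerate interval: if $I_k$ is degenerate with $m_k = \Theta(n^2)$ tied pairs at some $p_0$ and the $\eta$-grid happens to pass near $p_0$, the discretization may split the cluster roughly in half between two adjacent bins, producing a within-cluster contribution $\sum_b m_k^{(b)}(r_b - r_0)^2 = \Theta(n^6)$ and hence an $O(n^4)$ term after normalisation---not $O(n^4\epsilon)$. My plan to eliminate this is to invoke the Frobenius bound $\|\hat P - P\|_F^2 = O_p(n\log^d(n\rho)/\rho)$: the number of pairs in a degenerate cluster whose $\hat p$ is pushed across the nearest bin boundary at distance $d$ from $p_0$ is at most $O(n\log^d(n\rho)/(\rho d^2))$. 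Since the Assumption~1 partition yields only finitely many degenerate $p_0$, $\eta$ can be chosen within its admissible range $(\eta^2\rho)^{-1} = o(n)$ so that every relevant $d$ is $\Theta(\eta)$, forcing the cross-boundary count to $o(n^2)$ and bounding the within-cluster variance by $o(n^6)$, which after normalisation is $o(n^4)$ and hence at most $n^4\epsilon$ for sufficiently large $n$. Combining this with the flip-count estimate from light intervals and the $S$-contribution gives the claimed $Cn^4\epsilon$ bound.
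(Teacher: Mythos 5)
Your proposal follows essentially the same route as the paper's: split off the exceptional set $S=\{(i,j):|\hat p_{ij}-p_{ij}|>\eta\}$, whose contribution is $o(n^4)$ via the bound of Xu (2017) and $(\eta^2\rho)^{-1}=o(n)$; expand $R^2(\tilde p_{ij})-R^2(p_{ij})$ as a product of a sum and a difference of ranks; and control the rank perturbation $|R(\tilde p_{ij})-R(p_{ij})|$ on $S^c$ by $O(n^2\epsilon)$ using Assumption~1 with the discretization width small relative to $\delta$ --- this last bound is exactly the inequality $|R(\tilde p_{rs})-R(p_{rs})|\leq 5\tbinom{n}{2}\epsilon$ established in the proof of Lemma~3, which the paper's proof of Lemma~2 simply invokes rather than re-derives. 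Two differences are worth noting. First, you cancel the mean-rank terms exactly via the rank-sum identity $\sum_{i<j}R(p_{ij})=N(N+1)/2$ under the midrank convention, whereas the paper bounds the difference of the squared means separately by $Cn^4\epsilon$; both work, and yours is slightly cleaner, but it commits to a tie convention the paper never makes explicit. Second, your concern about a degenerate cluster of $\Theta(n^2)$ tied pairs being split across two adjacent $\eta$-bins is a genuine subtlety that the paper's derivation passes over in silence: if such a cluster splits, $|R(\tilde p_{ij})-R(p_{ij})|$ can be $\Theta(n^2)$ for $\Theta(n^2)$ pairs and the $O(n^4\epsilon)$ bound would fail. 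Your patch --- using the Frobenius error bound to show that only $o(n^2)$ estimates cross a bin boundary lying at distance $\Theta(\eta)$ from the cluster value, and then adjusting $\eta$ so that each of the finitely many degenerate values sits $\Theta(\eta)$ away from the grid --- is sound in spirit, but be aware that it makes the admissible $\eta$ depend on the unknown locations of the degenerate values, which is a stronger requirement than the theorem's hypothesis that $\eta$ is an arbitrary sequence with $(\eta^2\rho)^{-1}=o(n)$; this is a refinement of, not a departure from, the published argument.
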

Suppose Lemma 1 and 2 are valid then we can complete the proof of Theorem 3.1. Lemma 1 is used to control the numerator in each part while Lemma 2 is for the denominator. For any $\epsilon>0$, we have $\mathrm{cov}\{R(\tilde{P}),R(\tilde{Q})\}-\mathrm{cov}\{R(P),R(Q)\}=O_p(n^4\epsilon)$ by Lemma 1 and $\hat{\sigma}\{R(\tilde{P})\}=\hat{\sigma}\{R(P)\}+O_p(n^2\epsilon^{1/2})$ by Lemma 2. Under Assumption 2, $\hat{\sigma}\{R(P)\}=\Omega(n^2)$. Thus, it holds that
$\hat{\sigma}\{R(\tilde{P})\}=\Omega_p(n^2)$.
Similarly, $\hat{\sigma}\{R(\tilde{Q})\}=\Omega_p(n^2)$. Therefore, we have
$$\text{Part I}=\frac{\mathrm{cov}\{R(\tilde{P}),R(\tilde{Q})\}-\mathrm{cov}\{R(P),R(Q)\}}{\hat{\sigma}\{R(\tilde{P})\}\hat{\sigma}\{R(\tilde{Q})\}}=O_p({\epsilon}^{1/2}).$$
Then,
\begin{align*}
\text{Part II} = & \mathrm{cov}\{R(P),R(Q)\}\Bigl[\frac{1}{\hat{\sigma}\{R(\tilde{P})\}\hat{\sigma}\{R(\tilde{Q})\}}-\frac{1}{\hat{\sigma}\{R(P)\}\hat{\sigma}\{R(Q)\}}\Bigr]\\
\leq&\hat{\sigma}\{R(P)\}\hat{\sigma}\{R(Q)\}\frac{\hat{\sigma}\{R(P)\}\hat{\sigma}\{R(Q)\}-\hat{\sigma}\{R(\tilde{P})\}\hat{\sigma}\{R(\tilde{Q})\}}{\hat{\sigma}\{R(\tilde{P})\}\hat{\sigma}\{R(\tilde{Q})\}\hat{\sigma}\{R(P)\}\hat{\sigma}\{R(Q)\}}\\
=&\frac{\hat{\sigma}\{R(P)\}\hat{\sigma}\{R(Q)\}-\hat{\sigma}\{R(\tilde{P})\}\hat{\sigma}\{R(\tilde{Q})\}}{\hat{\sigma}\{R(\tilde{P})\}\hat{\sigma}\{R(\tilde{Q})\}}\\
=&\frac{\Bigl[\hat{\sigma}\{R(P)\}-\hat{\sigma}\{R(\tilde{P})\}\Bigr]\hat{\sigma}\Bigl\{R(Q)\Bigr]+\hat{\sigma}\{R(\tilde{P})\}\Bigl[\hat{\sigma}\{R(Q)\}-\hat{\sigma}\{R(\tilde{Q})\}\Bigr]}{\Omega_p(n^4)}\\
=&\frac{O_p(n^2\epsilon^{1/2})O(n^2)+\Bigl\{O_p(n^2\epsilon^{1/2})+O(n^2)\Bigr\}O_p(n^2\epsilon^{1/2})}{\Omega_p(n^4)}\\
=&\frac{2O_p(n^2\epsilon^{1/2})O(n^2)+O_p(n^4\epsilon)}{\Omega_p(n^4)}
=O_p({\epsilon}^{1/2}).
\end{align*}
Combining the above bounds yields
$$T_n(\tilde{P},\tilde{Q})-T_n(P,Q)=O_p({\epsilon}^{1/2}).$$
Since $\epsilon > 0$ is arbitrary, we have
$T_n(\tilde{P},\tilde{Q})-T_n(P,Q)=o_p(1)$ as desired.
\end{proof}
We now prove Lemmas 1 and 2. The proof of Lemma 1
depends on the following result.
\begin{lemma} Under assumptions 1,3 and 4, for any $\epsilon>0$, there exists a positive constant $C$ such that
$$\Bigl|\sum_{i<j}\Bigl\{R(\tilde{p}_{ij})R(\tilde{q}_{ij})-R(p_{ij})R(q_{ij})\Bigr\}\Bigr|\leq Cn^6\epsilon.$$
\end{lemma}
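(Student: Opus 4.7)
The plan is to first apply the algebraic identity
\begin{align*}
R(\tilde{p}_{ij})R(\tilde{q}_{ij}) - R(p_{ij})R(q_{ij}) &= R(\tilde{q}_{ij})\bigl[R(\tilde{p}_{ij}) - R(p_{ij})\bigr] + R(p_{ij})\bigl[R(\tilde{q}_{ij}) - R(q_{ij})\bigr],
\end{align*}
so that, since every rank is at most $\binom{n}{2}=O(n^2)$, bounding $R(\tilde{q}_{ij})$ and $R(p_{ij})$ crudely reduces the target inequality to the two symmetric statements $\sum_{i<j}|R(\tilde{p}_{ij}) - R(p_{ij})|=O(n^4\epsilon)$ and $\sum_{i<j}|R(\tilde{q}_{ij}) - R(q_{ij})|=O(n^4\epsilon)$, where, crucially, the absolute-value reduction has to be used only for those contributions where cancellation is not needed.

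Next, I would partition the pairs into a bad set $S=\{(i,j):|\tilde{p}_{ij}-p_{ij}|>2\eta\}$ and its complement. By Theorem 1 of \cite{xu2017rates} applied under Assumptions 3 and 4, the universal singular value thresholding estimator satisfies $\|\hat{P}-P\|_F^2=O_p(n\log^d(n\rho)/\rho)$; combined with the $\eta$-discretization and the hypothesis $(\eta^2\rho)^{-1}=o(n)$, Markov's inequality yields $|S|=o_p(n^2)$. Bounding the rank difference crudely by $\binom{n}{2}$ for pairs in $S$ contributes $o(n^4)$, which is absorbed by the target.

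For pairs outside $S$, a rank inversion between two good pairs $(i,j)$ and $(k,l)$ forces $|p_{ij}-p_{kl}|\leq 4\eta$. I would invoke Assumption 1 with its associated $\delta$, choose $\eta$ so that $4\eta\leq\delta$, and thereby restrict the interacting pairs to the same or adjacent $\delta$-intervals. For the case (i) intervals (at most $\epsilon\binom{n}{2}/2$ pairs each), summing $n_k(n_k+n_{k\pm 1})$ and using $\max_k n_k\leq \epsilon\binom{n}{2}$ produces a bound of order $\epsilon\binom{n}{2}^2=O(\epsilon n^4)$ on the inversion count from these intervals.

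The subtlest step, and the place where I expect the main obstacle, is the case (ii) intervals, where potentially many pairs share a common value $c_k$ and are tied at the same average rank $\bar{r}_k$ in $P$. A naive absolute-value bound fails, since within a cluster of size $m_k$ the absolute rank changes can be of order $m_k^2$, which for $m_k=\Theta(n^2)$ exceeds the target. To handle this I would retain the signed form of the identity and exploit rank-sum conservation: good pairs in the cluster have $\tilde{p}_{ij}$ within an $O(\eta)$ window of $c_k$ and hence lie in a bounded number of discretization bins, so $\sum_{(i,j)\in\mathrm{cluster}}[R(\tilde{p}_{ij})-\bar{r}_k]$ vanishes up to boundary effects produced by extraneous pairs from adjacent $\delta$-intervals. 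Assumption 1 again bounds the extraneous count by $O(\epsilon n^2)$, and a Cauchy-Schwarz estimate applied to the centered sum $\sum_{(i,j)\in\mathrm{cluster}}[R(\tilde{p}_{ij})-\bar{r}_k]\,R(\tilde{q}_{ij})$ then controls the cluster's contribution. Aggregating across all intervals and inserting the $O(n^2)$ rank prefactor delivers the claimed $O(n^6\epsilon)$ bound; handling the interaction between nearby case (ii) clusters whose tied values $c_k$ happen to fall within $O(\eta)$ of each other is where the estimate becomes most delicate.
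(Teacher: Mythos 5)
Your skeleton---the decomposition $R(\tilde p_{ij})R(\tilde q_{ij})-R(p_{ij})R(q_{ij})=[R(\tilde p_{ij})-R(p_{ij})]R(\tilde q_{ij})+R(p_{ij})[R(\tilde q_{ij})-R(q_{ij})]$, the bad set $S$ of pairs with USVT error exceeding $\eta$ controlled via the Frobenius-norm bound and Markov so that $|S|=o_p(n^2)$ and contributes only $o_p(n^6)$, and the use of Assumption 1 for the remaining pairs---is exactly the paper's. Where you diverge is at the crucial step. The paper counts no inversions and invokes no cancellation: it proves a \emph{pointwise} bound. For each good pair $(r,s)$ it sandwiches $R(\tilde p_{rs})$ between $|\{(i,j):p_{ij}< p_{rs}-4\eta\}|$ and $|\{(i,j):p_{ij}\le p_{rs}+4\eta\}|$ (using $|\tilde p_{ij}-p_{ij}|\le 2\eta$ off $S$), applies Assumption 1 to bound the number of pairs whose $p$-values fall in these $4\eta$-windows, and concludes $|R(\tilde p_{rs})-R(p_{rs})|\le 5\tbinom{n}{2}\epsilon$ uniformly over good pairs. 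Summing over the $\tbinom{n}{2}$ pairs and inserting the crude $O(n^2)$ bound on the ranks then gives $Cn^6\epsilon$ directly.

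The genuine gap is in your treatment of the case (ii) intervals, for two reasons. First, rank-sum conservation controls the unweighted centered sum $\sum_{\mathrm{cluster}}[R(\tilde p_{ij})-\bar r_k]$, but the quantity you must control is the weighted sum $\sum_{\mathrm{cluster}}[R(\tilde p_{ij})-\bar r_k]\,R(\tilde q_{ij})$, and nothing in your argument prevents the sign of $R(\tilde p_{ij})-\bar r_k$ from correlating with the size of $R(\tilde q_{ij})$; the cancellation does not transfer. Second, the Cauchy--Schwarz backstop evaluates to $\bigl(\sum[R(\tilde p_{ij})-\bar r_k]^2\bigr)^{1/2}\bigl(\sum R(\tilde q_{ij})^2\bigr)^{1/2}=O(m_k^{3/2})\cdot O(m_k^{1/2}n^2)=O(m_k^2n^2)$, which for a cluster of size $m_k=\Theta(n^2)$ is $O(n^6)$ with no factor of $\epsilon$---exactly the order of the target, so the step buys nothing; and you explicitly leave the interaction of nearby case (ii) clusters unresolved. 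To be fair, the difficulty you isolate (a tied cluster of $\Theta(n^2)$ pairs being split by the discretization grid into bins whose midranks differ from $\bar r_k$ by $\Theta(n^2)$) is a real subtlety that the paper's own inequality $|\{(i,j):p_{ij}\le p_{rs}+4\eta\}|\le R(p_{rs})+5\tbinom{n}{2}\epsilon$ also passes over quickly, but the intended argument is the uniform pointwise sandwich above, and as written your proposal does not close the case (ii) estimate.
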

\begin{proof}[of Lemma 3]
  Our proof is based on bounding $|R(\tilde{p}_{rs}) - R(p_{rs})|$ and
  $|R(\tilde{q}_{rs}) - R(q_{rs})|$. First consider pairs $(i,j)\notin S$. Since
  $|\hat{p}_{ij}-p_{ij}|\leq\eta$ and
  $|\tilde{p}_{ij}-\hat{p}_{ij}|\leq\eta$, we have
  $|\tilde{p}_{ij}-p_{ij}| \leq 2\eta$. Now suppose that
  $\tilde{p}_{ij}\leq x$. Then $p_{ij}\leq x+2\eta$.  We therefore have
  \begin{gather*}
    |\{(i,j):\tilde{p}_{ij}\leq x\}| \leq
    |\{(i,j):p_{ij}\leq x+2\eta\}|, \quad |\{(i,j):\tilde{p}_{ij}<x\}|\geq|\{(i,j):p_{ij}<
      x-2\eta\}|.
  \end{gather*}
Thus, for a given pair $(r,s)\notin \mathcal{S}$, we have
\begin{align*}
R(\tilde{p}_{rs})&\leq|\{(i,j):\tilde{p}_{ij}\leq \tilde{p}_{rs}\}|\leq |\{(i,j):p_{ij}\leq \tilde{p}_{rs}+2\eta\}|\leq |\{(i,j):p_{ij}\leq p_{rs}+4\eta\}|\\
&\leq R(p_{rs})+5 \tbinom{n}{2} \epsilon.
\end{align*}
A similar argument shows, for $(r,s) \not \in\mathcal{S}$, 
\begin{align*}
R(\tilde{p}_{rs})&\geq |\{(i,j):\tilde{p}_{ij}< \tilde{p}_{rs}\}|\geq |\{(i,j):p_{ij}<\tilde{p}_{rs}-2\eta\}|\geq |\{(i,j):p_{ij}< p_{rs}-4\eta\}|\\
&=|\{(i,j):p_{ij}\leq p_{rs}-4\eta\}|-|\{(i,j):p_{ij}= p_{rs}-4\eta\}|\\
&\geq R(p_{rs})- 4 \tbinom{n}{2} \epsilon- \tbinom{n}{2} 
\epsilon =R(p_{rs})-5 \tbinom{n}{2} \epsilon.
\end{align*}
Therefore, for $(r,s)\notin S$ and any $\epsilon >0$, we have
\begin{align}
\label{lm1eq1}
|R(\tilde{p}_{rs})-R(p_{rs})|\leq5 \tbinom{n}{2} \epsilon.
\end{align}
A similar argument yields
\begin{align}
\label{lm1eq2}
|R(\tilde{q}_{rs})-R(q_{rs})|\leq5 \tbinom{n}{2} \epsilon.
\end{align}
By applying (\ref{lm1eq1}) and  (\ref{lm1eq2}), for $(i,j)\notin S$ and any $\epsilon >0$, it holds that 
\begin{align*}
&\Bigl|R(\tilde{p}_{ij})R(\tilde{q}_{ij})-R(p_{ij})R(q_{ij})\Bigr|\\
& \leq\Bigl|R(\tilde{p}_{ij})-R(p_{ij})\Bigr|R(q_{ij})+\Bigl|R(\tilde{q}_{ij})-R(q_{ij})\Bigr|R(p_{ij})+\Bigl|R(\tilde{p}_{ij})-R(p_{ij})\Bigr|\Bigl|R(\tilde{q}_{ij})-R(q_{ij})\Bigr|\\
  &\leq5 \tbinom{n}{2} \epsilon \times R(q_{ij})+ 5 \tbinom{n}{2}
    \epsilon \times R(p_{ij})+ 25 \tbinom{n}{2}^2\epsilon^2. 
\end{align*}
Now consider $(i,j)\in S$. Then by assumption 3 and 4, $$\sum_{(i,j)\in S}\Bigl|R(\tilde{p}_{ij})R(\tilde{q}_{ij})-R(p_{ij})R(q_{ij})\Bigr|\leq O_p(|S|\cdot n^4)=o_p(n^6).$$ Therefore, for any $1\leq i<j\leq n$, there exists a positive constant $C$ such that
\begin{align*}
 &\Bigl|\sum_{1\leq i<j\leq n}R(\tilde{p}_{ij})R(\tilde{q}_{ij})-R(p_{ij})R(q_{ij})\Bigr|\\
 &\leq\sum_{(i,j)\notin S}\Bigl|R(\tilde{p}_{ij})R(\tilde{q}_{ij})-R(p_{ij})R(q_{ij})\Bigr|+\sum_{(i,j)\in S}\Bigl|R(\tilde{p}_{ij})R(\tilde{q}_{ij})-R(p_{ij})R(q_{ij})\Bigr|\\
 &\leq5 \tbinom{n}{2} \epsilon \sum_{(i,j)\notin S}R(p_{ij})+ 5
   \tbinom{n}{2} \epsilon \sum_{(i,j)\notin S}R(q_{ij})+25
   \tbinom{n}{2}^3 \epsilon^2+o_p(n^6) \\ 
 &\leq Cn^6\epsilon
\end{align*}
as desired.
\end{proof}
\begin{proof}[of Lemma 1]
By (\ref{lm1eq1}) and (\ref{lm1eq2}),
\begin{align*}
&\Bigl|\sum_{(i,j)\notin S}R(\tilde{p}_{ij})\sum_{(i,j)\notin S}R(\tilde{q}_{ij})-\sum_{(i,j)\notin S}R(p_{ij})\sum_{(i,j)\notin S}R(q_{ij})\Bigr|\\
\leq& \Bigl|\sum_{(i,j)\notin S}R(\tilde{p}_{ij})-\sum_{(i,j)\notin S}R(p_{ij})\Bigr|\sum_{(i,j)\notin S}R(q_{ij})+\Bigl|\sum_{(i,j)\notin S}R(\tilde{q}_{ij})-\sum_{(i,j)\notin S}R(q_{ij})\Bigr|\sum_{(i,j)\notin S}R(p_{ij})\\
&+\Bigl|\sum_{(i,j)\notin S}R(\tilde{p}_{ij})-\sum_{(i,j)\notin S}R(p_{ij})\Bigr|\Bigl|\sum_{(i,j)\notin S}R(\tilde{q}_{ij})-\sum_{(i,j)\notin S}R(q_{ij})\Bigr|\\
\leq&\sum_{(i,j)\notin S}\Bigl|R(\tilde{p}_{ij})-R(p_{ij})\Bigr|\sum_{(i,j)\notin S}R(q_{ij})+\sum_{(i,j)\notin S}\Bigl|R(\tilde{q}_{ij})-R(q_{ij})\Bigr|\sum_{(i,j)\notin S}R(p_{ij})\\
&+\sum_{(i,j)\notin S}\Bigl|R(\tilde{p}_{ij})-R(p_{ij})\Bigr|\sum_{(i,j)\notin S}\Bigl|R(\tilde{q}_{ij})-R(q_{ij})\Bigr|\\
=&10{\tbinom{n}{2}}^2\epsilon \cdot O(n^4)+25{\tbinom{n}{2}}^4\epsilon^2\leq Cn^8\epsilon.
\end{align*}
For $(i,j)\in S$, under assumption 3 and 4, it holds that 
$$\Bigl|\sum_{(i,j)\in S}R(\tilde{p}_{ij})\sum_{(i,j)\in S}R(\tilde{q}_{ij})-\sum_{(i,j)\in S}R(p_{ij})\sum_{(i,j)\in S}R(q_{ij})\Bigr|\leq O_p(|S|^2n^4)=o_p(n^8).$$
Therefore, for $1\leq i<j\leq n$ and any $\epsilon >0$, there exists a positive constant $C$ such that
\begin{align*}
\Bigl|\sum_{i<j}R(\tilde{p}_{ij})\sum_{i<j}R(\tilde{q}_{ij})-\sum_{i<j}R(p_{ij})\sum_{i<j}R(q_{ij})\Bigr|
\leq  Cn^8\epsilon.
\end{align*}
\end{proof}

\begin{proof}[of Lemma 2]
By (\ref{lm1eq1}), for $(i,j)\notin S$ and any $\epsilon>0$,  we have
\begin{align}
\label{lm2eq1}
	\sum_{(i,j)\notin S}\Bigl|R^2(\tilde{p}_{ij})-R^2(p_{ij})\Bigr|&\leq2\sum_{(i,j)\notin S}\Bigl|R(\tilde{p}_{ij})-R(p_{ij})\Bigr|R(p_{ij})+\sum_{(i,j)\notin S}\Bigl|R(\tilde{p}_{ij})-R(p_{ij})\Bigr|^2\nonumber\\
	&\leq10{\tbinom{n}{2}}\epsilon\cdot \sum_{(i,j)\notin S}R(p_{ij})+25{\tbinom{n}{2}}^3\epsilon^2\nonumber\\
	&=10{\tbinom{n}{2}}\epsilon\cdot  O(n^4)+25{\tbinom{n}{2}}^3\epsilon^2=Cn^6\epsilon.
\end{align}
Consider $(i,j)\in S$. Similarly, by assumption 3 and 4, it holds that
\begin{align}
\label{lm2eq2}
	\sum_{(i,j)\in S}\Bigl|R^2(\tilde{p}_{ij})-R^2(p_{ij})\Bigr|&\leq2\sum_{(i,j)\in S}\Bigl|R(\tilde{p}_{ij})-R(p_{ij})\Bigr|R(p_{ij})+\sum_{(i,j)\in S}\Bigl|R(\tilde{p}_{ij})-R(p_{ij})\Bigr|^2\nonumber\\
	&=O_p(|S|n^4)=o_p(n^6).
\end{align}

Thus, combining (\ref{lm2eq1}) and (\ref{lm2eq2}), for $1\leq i<j\leq n$ and any $\epsilon >0$, there exists $C>0$ such that 
\begin{align}
\label{lm2eq3}
\sum_{i<j}\Bigl|R^2(\tilde{p}_{ij})-R^2(p_{ij})\Bigr|\leq C n^6\epsilon.	
\end{align}
Following the similar procedure of deriving (\ref{lm2eq3}), it is easy to show for any $\epsilon >0$, there exists $C>0$ such that 
\begin{align}
\label{lm2eq4}
	\Bigl|\Bigl\{{\tbinom{n}{2}}^{-1}\sum_{i<j}R(\tilde{p}_{ij})\Bigr\}^2-\Bigl\{{\tbinom{n}{2}}^{-1}\sum_{i<j}R(p_{ij})\Bigr\}^2\Bigr|\leq Cn^4\epsilon.
\end{align}
By (\ref{lm2eq3}) and (\ref{lm2eq4}), for any $\epsilon >0$, there exists a positive constant $C$ such that
\begin{align*}
\small
&\Bigl|{\tbinom{n}{2}}^{-1}\sum_{i<j}\Bigl\{R(\tilde{p}_{ij})-\sum_{i<j}R(\tilde{p}_{ij})\Bigr\}^2-{\tbinom{n}{2}}^{-1}\sum_{i<j}\Bigl\{R(p_{ij})-{\tbinom{n}{2}}^{-1}\sum_{i<j}R(p_{ij})\Bigr\}^2\Bigr|\\
	\leq & {\tbinom{n}{2}}^{-1}\sum_{i<j}\Bigl|R^2(\tilde{p}_{ij})-R^2(p_{ij})\Bigr|+\Bigl|\Bigl\{{\tbinom{n}{2}}^{-1}\sum_{i<j}R(\tilde{p}_{ij})\Bigr\}^2-\Bigl\{{\tbinom{n}{2}}^{-1}\sum_{i<j}R(p_{ij})\Bigr\}^2\Bigr|\\
	\leq& Cn^4\epsilon.
\end{align*}
\end{proof}

Then we will prove Theorem 3.2, which requires the following lemma.

\begin{lemma}
Given $X=({x}_1,...,{x}_n)^{\top}\in\mathbb{R}^{n\times d}$ and
$Y=({y}_1,...,{y}_n)^{\top}\in\mathbb{R}^{n\times d}$. Define
$P=(p_{ij})\in\mathbb{R}^{n\times n}$ and
$Q=(q_{ij})\in\mathbb{R}^{n\times n}$, where
$p_{ij}=h(\|{x}_i-{x}_j\|)$ and $q_{ij}=g(\|{y}_i-{y}_j\|)$ for some
monotone decreasing functions $h,g$ from $\mathbb{R}$ onto
$\mathbb{R}$.
Under Assumption 3 and 6, if $\lim_{n\to\infty}T_n(P,Q)=1$ then there
exists a sequence of
monotone increasing functions $f_n$ from $\mathbb{R}$ onto $\mathbb{R}$ such that, as $n\to\infty$,
	$$\max_{i,j}\Bigl|\|{y}_i-{y}_j\|-f_n\Bigl(\|{x}_i-{x}_j\|\Bigr)\Bigr|
    \rightarrow 0.$$
\end{lemma}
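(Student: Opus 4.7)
The plan is to reduce the Spearman convergence $T_n(P,Q)\to 1$ to an $L^2$-type bound on rank differences of the pairwise distances, construct $f_n$ as the monotone rearrangement of the empirical mapping $d^x_{ij}\mapsto d^y_{ij}$ with $d^x_{ij}=\|x_i-x_j\|$ and $d^y_{ij}=\|y_i-y_j\|$, and then use the denseness in Assumption 6 to upgrade average rank control to a uniform distance bound.

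First, since $h$ and $g$ are strictly monotone decreasing, $R(p_{ij})=N+1-R(d^x_{ij})$ and $R(q_{ij})=N+1-R(d^y_{ij})$ where $N=\binom{n}{2}$, so $T_n(P,Q)$ equals Spearman's rank correlation between $\{d^x_{ij}\}$ and $\{d^y_{ij}\}$. By the classical identity for Spearman's rho (ties being handled via Assumption 1 exactly as in the proof of Theorem 3.1), $T_n\to 1$ gives
$$\sum_{i<j}\bigl(R(d^x_{ij})-R(d^y_{ij})\bigr)^2 = o(N^3) = o(n^6).$$

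Next, define $f_n$ by letting $d^x_{(1)}\leq\cdots\leq d^x_{(N)}$ and $d^y_{(1)}\leq\cdots\leq d^y_{(N)}$ be the sorted pairwise distances, setting $f_n(d^x_{(k)})=d^y_{(k)}$, interpolating piecewise-linearly between consecutive order statistics, and extending outside $[d^x_{(1)},d^x_{(N)}]$ monotonically so that $f_n$ is a monotone increasing function from $\mathbb{R}$ onto $\mathbb{R}$. Under this construction,
$$|d^y_{ij}-f_n(d^x_{ij})| = \bigl|d^y_{(R(d^y_{ij}))}-d^y_{(R(d^x_{ij}))}\bigr|,$$
so the distance error reduces to the gap between two order statistics of $\{d^y_{ij}\}$ whose indices differ by $|R(d^x_{ij})-R(d^y_{ij})|$. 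Assumption 6 (denseness of $\{y_i\}$ in $V$ with bounded-below density) forces the empirical distribution of $\{d^y_{ij}\}$ to converge to a nonatomic limiting distribution on $[0,\mathrm{diam}(V)]$, yielding a modulus-of-continuity statement: for every $\eta>0$ there exists $c_\eta>0$ such that $|d^y_{(k_1)}-d^y_{(k_2)}|<\eta$ whenever $|k_1-k_2|\leq c_\eta N$ for $n$ large.

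The remaining task, and the main obstacle, is to show $\max_{i<j}|R(d^x_{ij})-R(d^y_{ij})|=o(N)$, since the $L^2$ bound from Step 1 controls only the average and not the maximum rank discrepancy. The plan for removing outlier pairs is a twinning argument based on Assumption 6: suppose for contradiction there is a pair $(i^*,j^*)$ whose rank discrepancy is of order $\Theta(N)$. By the lower density condition, for any $\epsilon>0$ there are $\Theta(n)$ indices $i'$ with $\|x_{i'}-x_{i^*}\|<\epsilon$ and $\Theta(n)$ indices $j'$ with $\|x_{j'}-x_{j^*}\|<\epsilon$, producing $\Theta(n^2)$ twin pairs whose $d^x_{i'j'}$ cluster in a window of width $2\epsilon$ around $d^x_{i^*j^*}$, and hence whose ranks $R(d^x_{i'j'})$ lie in a contiguous block around $R(d^x_{i^*j^*})$. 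Either (a) the corresponding $d^y_{i'j'}$ also cluster near $d^y_{i^*j^*}$, in which case each of these $\Theta(n^2)$ twin pairs inherits a rank discrepancy of order $\Theta(N)$, contributing $\Theta(n^2\cdot N^2)=\Theta(n^6)$ to $\sum(R(d^x)-R(d^y))^2$ and contradicting the $o(n^6)$ bound; or (b) the $d^y_{i'j'}$ do not concentrate near $d^y_{i^*j^*}$, in which case $(i^*,j^*)$ is responsible for $\Theta(n^2)$ strict rank inversions with its twins, again producing a $\Theta(n^6)$ contribution. A covering argument over the compact product $U\times U$ upgrades this pointwise contradiction to a uniform bound. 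Inserting the uniform rank bound $o(N)$ into the modulus-of-continuity estimate yields $\max_{i,j}|d^y_{ij}-f_n(d^x_{ij})|\to 0$, completing the proof.
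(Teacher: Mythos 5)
Your construction of $f_n$ by monotone rearrangement of the two multisets of pairwise distances, together with the reduction of $|d^y_{ij}-f_n(d^x_{ij})|$ to a gap between order statistics of $\{d^y_{ij}\}$ controlled by a lower-density modulus of continuity extracted from Assumption 6, is a genuinely different route from the paper's: the paper instead takes $f_n=g^{-1}\circ\tilde{R}_q^{-1}\circ\tilde{R}_p\circ h$ for normalized empirical rank functions $\tilde{R}_p,\tilde{R}_q$, argues via uniform continuity, and disposes of badly-ranked pairs by a vertex-covering/removal step. Your Step 1 (Spearman identity giving $\sum_{i<j}(R(d^x_{ij})-R(d^y_{ij}))^2=o(N^3)$), your rearrangement construction, and your modulus-of-continuity claim for the $d^y$ order statistics are all sound.

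The gap is in the step you yourself flag as the main obstacle: upgrading the $L^2$ rank bound to $\max_{i<j}|R(d^x_{ij})-R(d^y_{ij})|=o(N)$. First, both branches of the twinning argument rely on the implication ``the twins' $x$-distances cluster in a window of width $2\epsilon$, hence their $x$-ranks cluster near $R(d^x_{i^*j^*})$.'' That implication needs an \emph{upper} bound on the number of pairs whose distances fall in a small window; Assumption 6 gives only lower bounds, so the contiguous rank block occupied by the twins can have width $\Theta(N)$ and a twin need not inherit any of the anchor's discrepancy. Second, and independently, the accounting in case (b) fails: a pair with $R(d^y_{i^*j^*})-R(d^x_{i^*j^*})=cN$ is \emph{automatically} inverted with at least $cN$ other pairs by pigeonhole, and an inverted partner can have zero rank discrepancy (take $R(d^x_{i'j'})=R(d^y_{i'j'})=m$ for any $m$ strictly between the anchor's two ranks), so $\Theta(n^2)$ inversions with the anchor force only the anchor's own $(cN)^2=\Theta(n^4)$ contribution to the Spearman sum --- nowhere near the $\Theta(n^6)$ you need for a contradiction. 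Indeed the $L^2$ bound tolerates $o(n^2)$ pairs each with rank discrepancy $\Theta(N)$, so the uniform bound over \emph{all} pairs cannot be extracted from it alone: you need either an anti-concentration hypothesis on the empirical distance distributions (the role Assumption 1 plays elsewhere in the paper) or, as the paper's proof does, an argument that the offending pairs are covered by $o(n)$ vertices, followed by a separate denseness argument to reinstate the removed pairs in the final maximum.
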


\begin{proof}[of Lemma 4]
Since $T_n(P, Q) \rightarrow 1$, for any $\epsilon > 0$ there exists a
universal constant $C$ and a $n_0 = n_0(\epsilon)$ such that if $n \geq n_0$ then the number of pairs
$\{i,j\}$ with $|R(p_{ij}) - R(q_{ij})| \geq \tbinom{n}{2} \epsilon$ is at most $C\tbinom{n}{2} \epsilon$. Let $\mathcal{S}$ be the set of
pairs satisfying 
\begin{equation}
  \label{eq2_2}
  |R(p_{ij}) - R(q_{ij})| < \tbinom{n}{2} \epsilon.
\end{equation}
Define rank functions normalized by ${n \choose 2}$ as
$\tilde{R}_p,\tilde{R}_q:[0,1]\mapsto[0,1]$. We can rewrite (\ref{eq2_2}) as
$$\Bigl|\tilde{R}_q\Bigl\{g(\|{y}_i - {y}_j\|)\Bigr\}-\tilde{R}_p\Bigl\{h(\|{x}_i - {x}_j\|)\Bigr\}\Bigr|<C\epsilon.$$
According to Assumption 3, $ \tilde{R}_q\circ g$
is uniformly continuous. Thus, for all $\{i,j\} \in \mathcal{S}$,
\begin{align*}
    \|{y}_i - {y}_j\|&\in\Bigl[g^{-1}\circ
                       \tilde{R}_q^{-1}\Bigl\{\tilde{R}_p\circ
                       h(\|{x}_i - {x}_j\|)\pm C\epsilon\Bigr\}\Bigr]\\
    &\in\Bigl[g^{-1}\circ \tilde{R}_q^{-1}\Bigl\{\tilde{R}_p\circ h(\|{x}_i - {x}_j\|)\Bigr\}\pm \epsilon_1\Bigr],
\end{align*}
where $\epsilon_1>0$ depends on $g^{-1}\circ \tilde{R}_q^{-1}$, $C$
and $\epsilon$. Also, $\epsilon_1 \rightarrow 0$ as $\epsilon
\rightarrow 0$ due to the uniform continuity of $\tilde{R}_q \circ g$.
Define a sequence of functions $f_n=g^{-1}\circ
\tilde{R}_q^{-1}\circ \tilde{R}_p\circ h$. We then have, for all
$\{i,j\} \in \mathcal{S}$,
$$\max_{i,j}\Bigl|\|{y}_i-{y}_j\|-f_n\Bigl(\|{x}_i-{x}_j\|\Bigr)\Bigr|\to 0,$$
as $n\to\infty$.

We next consider the pairs $\{i,j\} \not \in \mathcal{S}$. Suppose
first that there exists a pair $\{k, \ell\} \in \mathcal{S}$ such that
both $R(p_{ij}) - R(p_{k \ell}) \leq \tbinom{n}{2} \epsilon$ and
$R(q_{ij}) - R(q_{k \ell}) \leq \tbinom{n}{2} \epsilon$. Then by the
continuity and monotonicity of $f_n$, we have
$$\|y_{i} - y_{j}\| \in \Bigl(\|y_{k} - y_{\ell}\| \pm C
\epsilon_1\Bigr) \subset  \Bigl( f(\|x_{k} - x_{\ell}\| \pm \epsilon)
\pm C \epsilon_1\Bigr) 
$$
and by taking $\epsilon$ (and hence $\epsilon_1$) sufficiently small,
we have
$$\|y_{i} - y_{j}\| - f_n(\|x_i - x_j\|) \rightarrow 0$$
as $n \rightarrow \infty$.

It remains to consider the pairs $\{i,j\} \not \in \mathcal{S}$ such
that either $|R(p_{ij}) - R(p_{k \ell})| \geq \tbinom{n}{2} \epsilon$
for all $\{k, \ell\} \in \mathcal{S}$ or that $|R(q_{ij}) - R(q_{k
  \ell})| \geq \tbinom{n}{2} \epsilon$ for all $\{k, \ell\} \in
\mathcal{S}$. Suppose $|R(p_{ij}) - R(p_{k \ell})| \geq \tbinom{n}{2}
\epsilon$ for all $\{k,\ell\} \in \mathcal{S}$. Then there exists a
$\delta > 0$ such that for all $i', j'$, if
$$\|x_{i'} - x_{i}\| \leq \delta, \quad \|x_{j'} - x_j\| \leq \delta$$
then $|R(p_{i'j'}) - R(p_{ij})| \leq \tbinom{n}{2} \epsilon$, i.e., we
have $\{i',j'\} \not \in \mathcal{S}$. That is
to say, points $x_{i'}$ ``close'' to $x_i$ and $x_{j'}$ ``close``
to $x_j$ will have distance $\|x_{i'} - x_{j'}\|$ ``close'' to $\|x_i
- x_j\|$ and hence the ranks of $p_{i'j'}$ and $p_{ij}$ are
``close''. Assumption~6 then implies that number of points in $B(x_i, \delta)$ is of order $\Omega(n)$ as $n \rightarrow \infty$ and since
$|\mathcal{S}^{c}|$ has at most $C\tbinom{n}{2} \epsilon$ elements,
the vertex covering number for $\mathcal{S}^{c}$ is of order $O(n \epsilon^{1/2})$ as
$n \rightarrow \infty$. We can then remove these vertices from
consideration. We repeat the same procedure for $Q$.

In summary, if $T_n(P, Q) \rightarrow 1$ then there is a subset of
$\mathcal{T}$ rows of both $X$ and $Y$ such that $|\mathcal{T}| = n -
O(n \epsilon^{1/2})$ and $$\|y_{i} - y_{j}\| - f_n(\|x_i - x_j\|) \rightarrow 0, \quad i,j
\in \mathcal{T}.$$
As $\epsilon > 0$ is arbitrary, we can have $|\mathcal{T}| = n - o(n)$
for sufficiently large $n$.

Once again, by Assumption $6$, any sequence of $n - o(n)$ elements $\{x_i
\colon i \in \mathcal{T}\}$ will be dense in $U$ as $n \rightarrow \infty$, and the corresponding
$\{y_i \colon i \in \mathcal{T}\}$ will be dense in $V$. We therefore have,
$$\max_{ij} \Bigl|\|y_{i} - y_{j}\| - f_n(\|x_i - x_j\|)\Bigr| \longrightarrow 0$$
as $n \rightarrow \infty$ as desired.

\end{proof}

\noindent\textbf{Theorem 3.2} Under Assumption 3 and 6, if $T_n(P,Q)\to 1$ as $n\to\infty$, it holds that there exists $s
\in\mathbb{R}$, orthogonal $W\in\mathbb{R}^{d\times d}$ and ${t}\in\mathbb{R}^d$ such that 
$$\|X-sYW-1t^{\top}\|_F=o(n^{1/2}).$$

\begin{proof}[of Theorem 3.2]
Let $\phi_n:\Xi_n\mapsto \Omega_n\subset\mathbb{R}^d$ be a function with values in a bounded set $\Omega_n=\{{x}_1,...,{x}_n\}$.
Let $\Xi = \lim_{n \rightarrow \infty} \Xi_n$.

Now take ${y}_i,{y}_j,{y}_k\in\Xi$ such that
$\|{y}_i-{y}_j\|<\|{y}_i-{y}_k\|$. By definition, there is $m$ such
that ${y}_i,{y}_j,{y}_k\in\Xi_m$. Therefore, by Lemma 4, for any $0<
\epsilon \leq (\|{y}_i-{y}_k\|-\|{y}_i-{y}_j\|)/2$, there exists a
$m'\geq m$ such that $f_n(\|\phi_n({y}_i)-\phi_n({y}_j)\|)\leq
f_n(\|\phi_n({y}_i)-\phi_n({y}_k)\|)$ for any $n\geq m'$. Since $f_n
\colon \mathbb{R}\mapsto\mathbb{R}$ is an increasing function, we have $\|\phi_n({y}_i)-\phi_n({y}_j)\|\leq \|\phi_n({y}_i)-\phi_n({y}_k)\|$ for any $n\geq m'$.

By Lemma 2 in \cite{arias2017some}, since $\Xi_n\subset\mathbb{R}^d$ is finite and $\phi_n:\Xi_n\mapsto \Omega_n\subset \mathbb{R}^d$, where $\Omega_n$ is bounded, there is $N\subset\mathbb{N}$ infinite such that $\phi({y}_i)=\lim_{n\in N}\phi_n({y}_i)$ exists for all ${y}_i\in\Omega=\cup_{n=1}^{\infty}\Omega_n$.

Passing to the limit along $n\in N$ where $N$ is infinite and
$N\subset\mathbb{N}$, we obtain $\|\phi({y}_i)-\phi({y}_j)\|\leq
\|\phi({y}_i)-\phi({y}_k)\|$. Hence, $\phi$ is weakly isotonic on
$\Omega$ and by Theorem 1 in \cite{arias2017some}, there exists a
similarity transformation that coincides with $\phi$ on $\Omega$. That
is, there exists constant $s>0$, orthogonal matrix
$W\in\mathbb{R}^{d\times d}$ and  constant vector
${t}\in\mathbb{R}^{d}$ such that as $n\to\infty$, for all pairs
$(i,j)$ with $1\leq i<j\leq n$,  $$\|{x}_i-sW{y}_i-{t}\|\to 0.$$
We therefore have
\begin{align*}
   0\leq n^{-1/2}\|X-sYW-1t^{\top}\|_F&=\Bigl(n^{-1}\sum_{i=1}^n\|{x}_i-sW{y}_i-{t}\|^2\Bigr)^{1/2}\\
   &\leq \max_i\|{x}_i-s{y}_iW-{t}\|\to 0 
\end{align*}
as desired.
\end{proof}  
\subsection{Additional Simulation Study}
\textbf{\textit{Simulation 4: Sparsity.}} This
simulation is designed to investigate the the influence of sparsity of networks. Set dimension of embedding $K=3$, sparsity level $\rho$ satisfying $\rho= (\gamma \log{n})/n$ where $\gamma \in\{3,5\}$, $n\in\{100,200,500,1000\}$ and significant level $\alpha =0.05$. 
The latent positions are set to be $Y=X+Z$ where $Z=(z_{ij})\in\mathbb{R}^{n\times 2}$ and $z_{ij}\overset{iid}{\sim}  N(0,\epsilon)$ and $\epsilon\in\{0,0.02,0.1,0.2,0.5,1\}$. 

\begin{table}[H]
\centering
\def~{\hphantom{0}}
\caption{Power of the proposed test under different sparsity levels ($\alpha=0.05$).}{
\begin{tabular}{cccccccc}
  $\rho$&$n$&$\epsilon=0$&$\epsilon=0.02$&$\epsilon=0.1$&$\epsilon=0.2$&$\epsilon=0.5$&$\epsilon=1$\\
  \multirow{4}{*}{$\frac{3 \log{n}}{n}$}
  &100&0.05&0.15&0.29&0.37&0.88&0.97\\
  &200&0.05&0.15&0.01&0.64&0.96&1\\
  &500&0.05&0.29&0.50&0.92&1&1\\
  &1000&0.05&0.06&0.46&0.94&1&1\\
  \multirow{4}{*}{$\frac{5 \log{n}}{n}$}
  &100&0.05&0.05&0.14&0.59&0.87&1\\
  &200&0.05&0.01&0.35&0.26&1&1\\
  &500&0.05&0.07&0.33&1&1&1\\
  &1000&0.05&0.24&1&1&1&1\\
   \end{tabular}}
\label{sparstable}
\end{table}

Table \ref{sparstable} illustrates the performance of our test under different sparsity levels. Overall, it performs quite well especially in the mild sparse case with $\rho\geq 5$. When the network becomes more sparse ($\rho = 3$), the proposed test is not stable in the case with minor difference ($\epsilon \leq 0.1$) while it becomes much better and more robust as the difference is relatively larger ($\epsilon\geq0.5$) , for example, it has power 1 when $n$ is 500.

\end{document}